\setlist{nolistsep}
\title{K-Dominant Skyline Join Queries: Extending the Join Paradigm to
K-Dominant Skylines}
\author{
\alignauthor
Anuradha Awasthi \quad
Arnab Bhattacharya \quad
Sanchit Gupta \quad
Ujjwal Kumar Singh\\
%
\affaddr{Dept. of Computer Science and Engineering, Indian Institute of Technology, Kanpur, India}
\email{\{aawasthi,arnabb,sanchitg,ujjkumsi\}@cse.iitk.ac.in}
}
\newtheorem{thm}{Theorem}
\newtheorem{defn}{Definition}
\newtheorem{lem}{Lemma}
\newtheorem{prob}{Problem}
\newtheorem{assmp}{Assumption}
\newtheorem{obs}{Observation}
\renewcommand{\comment}[1]{}
\algrenewcommand{\algorithmicrequire}{\textbf{Input:}}
\algrenewcommand{\algorithmicensure}{\textbf{Output:}}
\algrenewcommand{\algorithmicforall}{\textbf{for each}}
\newcommand{\tabcaption}[1]{\vspace*{-2mm}\caption{#1}\vspace*{-2mm}}
\newcommand{\figcaption}[1]{\vspace*{-2mm}\caption{#1}\vspace*{-5mm}}
\newcommand{\kdom}{\ensuremath{k}-dominant\xspace}
\renewcommand{\ss}{\ensuremath{SS}\xspace}
\newcommand{\sn}{\ensuremath{SN}\xspace}
\newcommand{\nn}{\ensuremath{NN}\xspace}
\newcommand{\succk}{\ensuremath{\succ_k}\xspace}
\newcommand{\koned}{\ensuremath{k'_1}\xspace}
\newcommand{\ktwod}{\ensuremath{k'_2}\xspace}
\newcommand{\konedd}{\ensuremath{k''_1}\xspace}
\newcommand{\ktwodd}{\ensuremath{k''_2}\xspace}
\newcommand{\ts}{\ensuremath{\tau}\xspace}
\newcommand{\subfigwidth}{0.52\columnwidth}
\begin{document}

\maketitle

\begin{abstract}
	Skyline queries enable multi-criteria optimization by filtering objects
	that are worse in all the attributes of interest than another object.  To
	handle the large answer set of skyline queries in high-dimensional
	datasets, the concept of $k$-dominance was proposed where an object is said
	to dominate another object if it is better (or equal) in at least $k$
	attributes.  This relaxes the full domination criterion of normal skyline
	queries and, therefore, produces lesser number of skyline objects.  This is
	called the $k$-dominant skyline set.  Many practical applications, however,
	require that the preferences are applied on a joined relation.  Common
	examples include flights having one or multiple stops, a combination of
	product price and shipping costs, etc.  In this paper, we extend the
	$k$-dominant skyline queries to the join paradigm by enabling such queries
	to be asked on joined relations.  We call such queries \emph{KSJQ
	($k$-dominant skyline join queries)}.  The number of skyline attributes,
	$k$, that an object must dominate is from the combined set of skyline
	attributes of the joined relation.  We show how pre-processing the base
	relations helps in reducing the time of answering such queries over the
	na\"ive method of joining the relations first and then running the
	$k$-dominant skyline computation.  We also extend the query to handle cases
	where the skyline preference is on aggregated values in the joined relation
	(such as total cost of the multiple legs of the flight) which are available
	only after the join is performed.  In addition to these problems, we devise
	efficient algorithms to choose the value of $k$ based on the desired
	cardinality of the final skyline set.  Experiments on both real and
	synthetic datasets demonstrate the efficiency, scalability and practicality
	of our algorithms.
\end{abstract}

\keywords{Skyline query; K-dominant skyline query; Join; Aggregation;
K-dominant skyline join query; KSJQ}

\section{Introduction and Motivation}

Skyline queries are widely used to enable multi-criteria decision making in
databases \cite{skylinebnl}.  Consider a scenario where a person wants to buy a
good house.  Her preferences are low cost, proximity to market, quiet
neighborhood, etc.  In a real scenario, it is almost impossible to find a
single house that is best in all her preferences.  The skyline query helps her
narrow down the choices by filtering out houses that are \emph{worse} (or
equal) than some other house in \emph{all} the preferences.  Assuming
rationality, her choices cannot lie outside the \emph{skyline set}.

In high-dimensional spaces, however, the skyline set becomes less useful due to
its impractically large size.  The size tends to increase exponentially with
dimensionality \cite{skylinekernel}.  This happens as it becomes harder for any
object to dominate another object in \emph{all} the attributes.  The problem is
especially severe in real datasets where the data is generally anti-correlated
in nature.  For example, a quiet neighborhood closer to a market is likely to
be more costly.

There have been various works on handling the large cardinality of skyline sets
in high dimensions.  The most prominent is that of \kdom skylines where,
instead of being better in all the $d$ dimensions, an object need only be
better in some $k < d$ dimensions to dominate another object
\cite{skylinekdominant}.  As a result, it becomes easier for an object to be
dominated which leads to lesser number of skylines.

The \kdom skylines are quite useful in real scenarios.  In the example of
housing discussed above, if there are many attributes, it may be rare
that one house is better than another on all the counts.  Instead, if a smaller
number of attributes, say $k = 2$, is specified, there are more chances of
finding a house that has a lower cost and a quieter neighborhood (but may not
be closer to a market) than another house.  As a result, more houses can be
filtered, and the retrieved skyline set becomes more manageable and useful.

To the best of our knowledge, however, the \kdom skyline queries have not been
explored for \emph{multiple} relations.\footnote{A preliminary version of this
paper will appear as a poster \cite{poster}.}  Suppose there are two relations
having $d_1$ and $d_2$ skyline attributes.  After joining, a bigger relation
with $d_1 + d_2$ skyline attributes is formed.  (We discuss the different
variants and restrictions later.)  A $k$-dominant skyline, where $k < d_1 +
d_2$, is then sought on this joined relation.

A real-life example of this situation happens often in flight bookings.
Suppose a person wants to fly from city A to city B.  Her preferences are lower
cost, lower duration, higher ratings and higher amenities.  While the basic
skyline works for direct flights, in many cases, a flight route from A to B
includes one (or more) stopovers.  Thus, a valid flight path contains the join
of all flights from city A to other cities and from those cities to city B
where the intermediate city is the same.  The preferences a user would want now
applies to the \emph{entire} flight path and not a single leg of the journey.
The skyline is, therefore, needed on the joined relation
\cite{asjq,skylinejoin}.

Once more, it is harder for a flight combination to dominate another flight
combination in all the skyline attributes over the two relations.  Here, the
\kdom skyline query is a natural choice, where $k$ is less than the total
number of the skyline attributes in the joined relation.

With the increase in the number of attributes, the size of the skyline set
increases further and, hence, computing the \kdom skylines becomes even more
relevant and useful.

The na\"ive approach first creates the joined relation and then subsequently
computes the $k$-dominance.  This strategy, while straightforward, is
inefficient and impractical for large datasets.

A further practical consideration in the flight example is that a user is not
really bothered about the individual legs, but rather the \emph{total} cost and
\emph{total} duration of the journey.  Thus, the skyline preferences should be
applied on the \emph{aggregated} values of attributes from the base relations.
Note that the aggregated values are available only after the join and are,
therefore, harder to process efficiently.

In this paper, we explore the question of finding \kdom skylines on joined
relations, where the skyline preferences can be on both aggregated and
individual values.  Apart from posing the problem, our main contribution is to
push the skyline operator before the join as much as possible, thereby making
the whole algorithm efficient and practical.

In addition to finding \kdom skylines, an important question that often arises
in practical applications is how to choose a ``good'' value of $k$?  While
there is no universal answer, one of the guiding principles is the number of
skyline objects finally returned \cite{skylinekdominant}.  The basic idea of
skyline queries is to serve as a filter for poor objects and, thus, a user may
find it easier to specify a value of $\delta$ objects that she is interested in
examining more thoroughly rather than a value of $k$.  Since the size of the
skyline set increases with $k$, the ``optimal'' value of $k$ may be then taken
as the smallest one that returns at least $\delta$ skyline objects, or the
largest one that returns at most $\delta$ skyline objects.

We address the above question in the context of \kdom skylines over joined
relations.

In sum, our contributions are:

\begin{enumerate}

	\item We propose the problem of finding \kdom skyline queries over joined
		relations.  We term such queries \emph{KSJQ}.

	\item We design efficient algorithms to solve the KSJQ problem.

	\item We devise ways to arrive at a good value of ``$k$'' by specifying a
		threshold size of the \kdom skyline set.

\end{enumerate}

The rest of the paper is organized as follows.  Sec.~\ref{sec:background} sets
the background on joins of \kdom skylines.  Using this, different problem
statements are defined in Sec.~\ref{sec:problem}.  Sec.~\ref{sec:work} outlines
the related work.  Various optimizations that can improve the efficiency of the
problem are explained in Sec.~\ref{sec:opt}.  Algorithms that use these
optimizations are described in Sec.~\ref{sec:algo}.  Sec.~\ref{sec:exp}
analyzes the experimental results before Sec.~\ref{sec:conc} concludes.

\section{Background}
\label{sec:background}

\subsection{Skylines}

Consider a dataset $R$ of objects.  For each object $u$, a set of $d$
attributes $\{u_1, \dots, u_d\}$ are specified, which form the \emph{skyline
attributes}.  For each of the skyline attributes, without loss of generality,
the preference is assumed to be less than ($<$), i.e., a lower value is
preferred over a higher one.  An object $u$ \emph{dominates} another object
$v$, denoted by $u \succ v$, if and only if, for all the $d$ skyline
attributes, $u_i$ is preferred over or equal to $v_i$, and there exists at
least one skyline attribute where $u_j$ is strictly preferred over $v_j$.  The
\emph{skyline} set $S \subseteq R$ contains objects that are \emph{not}
dominated by any other object \cite{skylinebnl}.  In other words, every object
in the non-skyline set is dominated by at least one object in the dataset.

\subsection{K-Dominant Skylines}

The \emph{$k$-dominant} skyline query \cite{skylinekdominant} relaxes the
definition of domination between objects.  An object $u$ \emph{$k$-dominates}
another object $v$, denoted by $u \succ_k v$, if and only if, for at least $k$
of the $d$ skyline attributes, $u_i$ is preferred over or equal to $v_i$, and
there exists at least one skyline attribute where $u_j$ is strictly preferred
over $v_j$.  The \emph{\kdom skyline} set contains objects that are not
$k$-dominated by any other object.  The $k$ attributes are \emph{not} fixed and
can be \emph{any} subset of the $d$ skyline attributes.

The \kdom query is particularly problematic when $k \leq \frac{d}{2}$ since
then two objects can dominate each other.  Even when $k > \frac{d}{2}$, the
$k$-dominance relationship is \emph{not transitive} and may be even
\emph{cyclic}: $u \succ_k v \succ_k w \succ_k u$.

\subsection{Multi-Relational Skylines}

Consider two datasets $R_1$ and $R_2$ with $d_1$ and $d_2$ skyline attributes
respectively.  The \emph{join} of the two relations (using appropriate join
conditions) forms the dataset $R = R_1 \Join R_2$ with $d = d_1 + d_2$ skyline
attributes.  The \emph{multi-relational skyline} is extracted from $R$
\cite{skylinejoin}.

In a significant variant of the problem, a number of skyline attributes in each
relation are marked for \emph{aggregation} with corresponding attributes from
the other relation \cite{asjq}.  Thus, $a$ attributes out of $d_1$ in $R_1$ and
$d_2$ in $R_2$ are aggregated in the joined relation $R$.  As a result, $R$
contains $(d_1 - a) + (d_2 - a) + a = d_1 + d_2 - a$ skyline attributes.  The
skyline query is then asked over these attributes.  The aggregation function is
assumed to be \emph{monotonic}; this ensures that if the base values of two
tuples $u \in R_1$ and $t \in R_2$ are preferred over the base tuples of two
other tuples $v \in R_1$ and $s \in R_2$ respectively, the aggregated value of
$u \Join t \in R$ will be necessarily preferred over the aggregated value of $v
\Join s \in R$.

The case for more than two base relations can be handled by cascading the
joins.

\section{Problem Statements}
\label{sec:problem}

We propose two main variants of what we call the \textsc{k-dominant skyline
join query (KSJQ)} problem.

The first variant works on relations for which the skyline attributes are
strictly \emph{local}.


The number of skyline attributes in the first and second relations, $R_1$ and
$R_2$, are $d_1$ and $d_2$ respectively.
\begin{align}
	R_1 &= \{h_{1_1}, \dots, h_{1_m}, s_{1_1}, \dots, s_{1_{d_1}}\} \\
	R_2 &= \{h_{2_1}, \dots, h_{2_m}, s_{2_1}, \dots, s_{2_{d_2}}\} \\
	\label{eq:basic}
	R &= \{h_{1}, \dots, h_{m}, s_{1_1}, \dots, s_{1_{d_1}}, s_{2_1}, \dots, s_{2_{d_2}}\}
\end{align}
where $h_{i}$ captures the join of $h_{1_i}$ with $h_{2_i}$ in the joined
relation $R = R_1 \Join R_2$, and $s_{1_i}, s_{2_i}$ are the skyline attributes.


\begin{prob}[KSJQ]
	\label{prob:plain}
	Given two datasets $R_1$ and $R_2$ having $d_1$ and $d_2$ skyline attributes
	respectively, find \emph{\kdom skylines} from the joined relation $R = R_1
	\Join R_2$ having $d = d_1 + d_2$ skyline attributes.
\end{prob}

We restrict $k$ to be at least one more than the dimensionality in the base
relations, i.e., $\max\{d_1, d_2\} < k < d$.  This restriction constrains at
least some skyline attributes from each relation to satisfy the preferences.
In other words, the $k$ preferred attributes must span \emph{both} the
relations.  However, there is no restriction over how many skyline attributes
from each relation must be satisfied.  Denoting the number of skyline
attributes that are chosen from the two relations by $k_1$ and $k_2$ where $k_1
+ k_2 = k$, this implies that $1 \leq k_1 \leq d_1$ and $1 \leq k_2 \leq d_2$.

The second variant works on the \emph{aggregate} version of the problem.
Assuming a total of $a$ aggregate attributes, the number of attributes on which
only local skyline preferences are applied are $d_1 - a = l_1$ and $d_2 - a =
l_2$ respectively.  The joined relation $R$, therefore, contains $l_1 + l_2$
local attributes and $a$ aggregate attributes.

\comment{

Summarizing,
\begin{align}
	R_1 &= \{h_{1_1}, \dots, h_{1_m}, s_{1_1}, \dots, s_{1_{l_1}}, g_{1_1}, \dots, g_{1_a}\} \\
	R_2 &= \{h_{2_1}, \dots, h_{2_m}, s_{2_1}, \dots, s_{2_{l_2}}, g_{2_1}, \dots, g_{2_a}\} \\
	\label{eq:agg}
	R &= \{h_{1}, \dots, h_{m},
		s_{1_1}, \dots, s_{1_{l_1}}, s_{2_1}, \dots, s_{2_{l_2}},
		g_{1}, \dots, g_{a}\}
\end{align}
where, in addition to the joins $h_{i}$, the attributes $g_{j} = g_{1_j}
\oplus_j g_{2_j}$ denote the aggregated values.

}

\begin{prob}[Aggregate KSJQ]
	\label{prob:agg}
	Given two datasets $R_1$ and $R_2$ having $d_1$ and $d_2$ skyline attributes
	respectively, of which $a$ attributes are used for aggregation, find
	\emph{\kdom skylines} from the joined relation $R = R_1 \Join R_2$ having
	$d_1 + d_2 - a$ skyline attributes.
\end{prob}

Once more, we assume that $\max\{d_1, d_2\} < k \leq d$.
Expressing in terms of local and aggregate attributes, $\max\{l_1,
l_2\} + a < k \leq l_1 + l_2 + a$.

The third and fourth problems address the tuning of the value of $k$.
The value can be tuned in two ways.

\begin{prob}[At least $\delta$]
	\label{prob:k}
	Given a KSJQ query framework and a threshold number of skyline objects
	$\delta$, determine the \emph{smallest} value of $k$ that returns \emph{at
	least} $\delta$ skyline objects.
\end{prob}

\begin{prob}[At most $\delta$]
	\label{prob:kup}
	Given a KSJQ query framework and a threshold number of skyline objects
	$\delta$, determine the \emph{largest} value of $k$ that returns \emph{at
	most} $\delta$ skyline objects.
\end{prob}

Prob.~\ref{prob:kup} is directly linked with Prob.~\ref{prob:k}.  If $k^*$ is
the answer to Prob.~\ref{prob:k}, then the answer to Prob.~\ref{prob:kup} must
be $k^* - 1$.  There are two corner cases.  First, if $k^* = 1$, then the
answer to Prob.~\ref{prob:kup} should be trivially $1$ as well.  Second, if
$k^*$-dominant skyline returns \emph{exactly} $\delta$ skyline objects (or $k^*
= d$), then the answer to Prob.~\ref{prob:kup} is $k^*$ as well.  Thus,
henceforth, we focus only on Prob.~\ref{prob:k}.

\section{Related Work}
\label{sec:work}

The skyline operator was introduced in databases by \cite{skylinebnl} by
adopting the maximal vector or the Pareto optimal problem \cite{pareto}.
Several indexed \cite{skylinenn}, \cite{skylinebbs} and non-indexed algorithms
\cite{skylinebnl,skylinesfs,skylinebitmap} have been since proposed to retrieve
the skyline set.

Analyses of the cardinality of the skyline set \cite{skylinecardinality,
skylinekernel} have shown that the number of skyline objects can grow
exponentially with the increase in dimensionality.  Consequently, several
attempts have been made to restrict the size of the skyline set.  These mostly
include notions of approximate skylines or representative skylines
\cite{kmostrep2,kmostrep1,lskydiv,thicksky,adr,kmostrep,approxcontour,skydiver,epsdom,strongsky}.

A completely different approach---\emph{\kdom skylines}---was proposed by
\cite{skylinekdominant}.  It uses subsets of skyline attributes to control the
cardinality of the skyline set.  Although the parameter $k$ is an input to the
problem, the authors also proposed a way to derive the smallest $k$ that will
guarantee a $\delta$ number of skylines.

There have been many recent works on \kdom skylines including extensions to
high-dimensional spaces \cite{extendedkdom,kdompower}, using parallel
processing \cite{kdommapreduce}, cardinality estimation \cite{kdomcardinality}
as well as in update-heavy datasets \cite{kdomupdate} and combined datasets
\cite{kdomcombined}.

The skyline join problem where skylines are retrieved from joined relations,
both components of which contain skyline attributes, was introduced in
\cite{skylinejoin}.  This work, however, simply assumed that the skyline
attributes remain unchanged from the base relations to the joined one.  The
aggregate skyline join queries \cite{asjq} removed this restriction by enabling
skyline preferences to be posed on aggregate values of attributes formed by
combining attributes from the base relations.  The aggregation function was
assumed to be monotonic.

In this paper, we pose the \kdom skyline problem in the join paradigm
(including the aggregated version).  To the best of our knowledge, this is the
first work in this direction.

\section{Optimizations}
\label{sec:opt}

In this section, we describe the various optimization schemes that can be used
to speed-up the process of finding \kdom skylines in joined relations.
Sec.~\ref{sec:algo} uses these optimizations to design the algorithms.

\subsection{Join Attributes}

We assume an \emph{equality} join condition.  Two base tuples $u \in R_1$ and
$v \in R_2$ can be joined to form $t = u \Join v \in R$ if and only if all
their join attributes match.  Referring to Eq.~\eqref{eq:basic}, $\forall_{j =
1, \dots, m} \ h_{1_j} = h_{2_j}$.  (We discuss the relaxation of this
assumption in Sec.~\ref{sec:nonequality}.)

\begin{assmp}[Equality Join]
	The join conditions are all based on equality, i.e., the join is an equality
	join.
\end{assmp}

All the tuples in a base relation are, thus, divided into \emph{groups}
according to the value of the join attributes.  In every group, the values of
the join attributes, $h_{i_1}, \dots, h_{i_m} (i = 1,2)$, are the same.

In the flight example, since the joining criterion is destination of first
flight to be the source of the second flight, the first base relation is
divided on the basis of destinations while the second one is divided on
sources.

\subsection{Grouping}

Based on the $k$-dominance properties within each group, each base relation is
partitioned into different sets as follows.

A tuple may be a \kdom tuple for the entire base relation.  However, even when
it is not, it may not be $k$-dominated by any other tuple in its group.  It is
then a \kdom tuple when only its group is concerned.

Based on the above notion of $k$-dominance within a group, each base relation
$R_i$ is divided into $3$ \emph{mutually exclusive} and \emph{exhaustive} sets
$\ss_i$, $\sn_i$, and $\nn_i$:
\begin{align}
	R_i = \ss_i \cup \sn_i \cup \nn_i
\end{align}

We next define the three sets.

\begin{defn}[\ss]
	A tuple $u$ is in \ss if $u$ is a \kdom skyline in the \emph{overall}
	relation; consequently, it is a \kdom skyline in its group as well.
\end{defn}

\begin{defn}[\sn]
	A tuple $u$ is in \sn if $u$ is a \kdom skyline \emph{only} in its
	\emph{group} but not in the overall relation.
\end{defn}

\begin{defn}[\nn]
	A tuple $u$ is in \nn if $u$ is \emph{not} a \kdom skyline in its group;
	consequently, it is not a \kdom skyline in the overall relation as well.
\end{defn}


\begin{table}[t]
	\centering
	\begin{tabular}{|c|c|cccc|c|}
		\hline
		fno & destination & cost & dur & rtg & amn & category \\
		\hline
		\hline
		11 & C & 448 & 3.2 & 40 & 40 & $\ss_1$ \\
		12 & C & 468 & 4.2 & 50 & 38 & $\nn_1$ \\
		13 & D & 456 & 3.8 & 60 & 34 & $\sn_1$ \\
		14 & D & 460 & 4.0 & 70 & 32 & $\nn_1$ \\
		15 & E & 450 & 3.4 & 30 & 42 & $\sn_1$ \\
		16 & F & 452 & 3.6 & 20 & 36 & $\ss_1$ \\
		17 & G & 472 & 4.6 & 80 & 46 & $\sn_1$ \\
		18 & H & 451 & 3.7 & 20 & 37 & $\ss_1$ \\
		19 & E & 451 & 3.7 & 40 & 37 & $\nn_1$ \\
		\hline
	\end{tabular}
	\tabcaption{Flights from city A ($f_1$).}
	\label{tab:a}
\end{table}

\begin{table}[t]
	\centering
	\begin{tabular}{|c|c|cccc|c|}
		\hline
		fno & source & cost & dur & rtg & amn & category \\
		\hline
		\hline
		21 & D & 348 & 2.2 & 40 & 36 & $\ss_2$ \\
		22 & D & 368 & 3.2 & 50 & 34 & $\nn_2$ \\
		23 & C & 356 & 2.8 & 60 & 30 & $\sn_2$ \\
		24 & C & 360 & 3.0 & 70 & 28 & $\nn_2$ \\
		25 & E & 350 & 2.4 & 30 & 38 & $\sn_2$ \\
		26 & F & 352 & 2.6 & 20 & 32 & $\ss_2$ \\
		27 & G & 372 & 3.6 & 80 & 42 & $\sn_2$ \\
		28 & H & 350 & 2.4 & 35 & 37 & $\sn_2$ \\
		\hline
	\end{tabular}
	\tabcaption{Flights to city B ($f_2$).}
	\label{tab:b}
\end{table}

\begin{table*}[t]
	\centering
	\begin{tabular}{|c|c|cccc|cccc|c|c|}
		\hline
		fno & stop-over & f1.cost & f1.dur & f1.rtg & f1.amn &
		f2.cost & f2.dur & f2.rtg & f2.amn & categorization & skyline \\
		\hline
		\hline
		(11,23) & C & 448 & 3.2 & 40 & 40 & 356 & 2.8 & 60 & 30 & $\ss_1 \Join \sn_2$ & yes \\
		(11,24) & C & 448 & 3.2 & 40 & 40 & 360 & 3.0 & 70 & 28 & $\ss_1 \Join \nn_2$ & no  \\
		(12,23) & C & 468 & 4.2 & 50 & 38 & 356 & 2.8 & 60 & 30 & $\nn_1 \Join \sn_2$ & no  \\
		(12,24) & C & 468 & 4.2 & 50 & 38 & 360 & 3.0 & 70 & 28 & $\nn_1 \Join \nn_2$ & no  \\
		(13,21) & D & 456 & 3.8 & 60 & 34 & 348 & 2.2 & 40 & 36 & $\sn_1 \Join \ss_2$ & yes \\
		(13,22) & D & 456 & 3.8 & 60 & 34 & 368 & 3.2 & 50 & 34 & $\sn_1 \Join \nn_2$ & no  \\
		(14,21) & D & 460 & 4.0 & 70 & 32 & 348 & 2.2 & 40 & 36 & $\nn_1 \Join \ss_2$ & no  \\
		(14,22) & D & 460 & 4.0 & 70 & 32 & 368 & 3.2 & 50 & 34 & $\nn_1 \Join \nn_2$ & no  \\
		(15,25) & E & 450 & 3.4 & 30 & 42 & 350 & 2.4 & 30 & 38 & $\sn_1 \Join \sn_2$ & yes \\
		(16,26) & F & 452 & 3.6 & 20 & 36 & 352 & 2.6 & 20 & 32 & $\ss_1 \Join \ss_2$ & yes \\
		(17,27) & G & 472 & 4.6 & 80 & 46 & 372 & 3.6 & 80 & 42 & $\sn_1 \Join \sn_2$ & no  \\
		(18,28) & H & 451 & 3.7 & 20 & 37 & 350 & 2.4 & 35 & 39 & $\ss_1 \Join \sn_2$ & no  \\
		(19,25) & E & 451 & 3.7 & 40 & 37 & 350 & 2.4 & 30 & 38 & $\nn_1 \Join \sn_2$ & no  \\
		\hline
	\end{tabular}
	\tabcaption{Joined relation ($f_1 \Join f_2$).}
	\label{tab:join}
\end{table*}

Consider the examples in Table~\ref{tab:a} and Table~\ref{tab:b}.  We assume
that all the attributes have lower preferences\footnote{Although the preferences
for ratings and amenities are generally the other way round, we stick to lower
preferences for ease of understanding.}.  We set $k = 3$ for both the relations.
The categorization of the tuples are shown in the last column.

Table~\ref{tab:join} shows the joined relation.

The important outcome of this division into the three sets is that it allows
certain tuples to be automatically designated as \kdom skylines (or not)
\emph{without} computing the join, as explained next.

\subsection{Deciding about Skylines before Join}
\label{sec:sep}

\comment{

For the time being, for simplicity, we assume that the following \emph{unique
value property} (UVP) holds.

\begin{defn}[Unique Value Property, UVP]
	A relation $R$ holds the \emph{unique value property (UVP)} if and only if
	for each (skyline) attribute of $R$, all the tuples have distinct values,
	i.e., all the tuples are unique.
\end{defn}

While the situation is quite common for real-valued attributes, the base
relations may not follow the assumption when there are categorical or
integer-valued attributes.  In Sec.~\ref{sec:nouvp}, we remove the UVP
assumption.

Since each relation is broken into $3$ groups, the join of the two base
relations can be represented as a union of join of $9$ sub-relations.

}


We first analyze a simple case where out of the final $k$ attributes in the
joined relation, if a joined tuple $t$ $k$-dominates another joined tuple $s$,
$t$ must $k_1$-dominate $s$ in the first base relation and must $k_2$-dominate
in the second base relation ($k = k_1 + k_2$).

Note that, in practice this situation will not be specified by any user.  We
are only going to use it to explain the basic concepts and then build upon it
later by removing this assumption.  For the theorems and observations in this
section, the first base relation is divided into the three sets $\ss_1$,
$\sn_1$, $\nn_1$ according to $k_1$-domination while the second base relation
is divided into $\ss_2$, $\sn_2$, $\nn_2$ using $k_2$-domination.

The first theorem shows that a tuple formed by joining the \ss counterparts is
\emph{always} a \kdom skyline.

\begin{thm}
	\label{thm:sepss}
	The tuples in the set $(\ss_1 \Join \ss_2)$ are \kdom skylines.
\end{thm}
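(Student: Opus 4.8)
The plan is to prove the statement by contradiction, chaining the separability assumption of this section with the definition of $\ss$. Take any joinable pair $u \in \ss_1$, $v \in \ss_2$ and form $t = u \Join v \in R$; the goal is to show that no tuple of $R$ can $k$-dominate $t$.

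First I would suppose, for contradiction, that $t$ is \emph{not} a \kdom skyline. Then, by the definition of the \kdom skyline set, there must exist some tuple $s \in R$ with $s \succ_k t$. Writing $s = u' \Join v'$ with $u' \in R_1$ and $v' \in R_2$, I would next apply the separability assumption in force throughout this section: because $s \succ_k t$ holds in the joined relation, $s$ must $k_1$-dominate $t$ in the first base relation and $k_2$-dominate it in the second, i.e., $u' \succ_{k_1} u$ in $R_1$ and $v' \succ_{k_2} v$ in $R_2$. Finally I would invoke the definition of $\ss_1$: since $u \in \ss_1$, the tuple $u$ is a $k_1$-dominant skyline of the \emph{entire} relation $R_1$, so \emph{no} tuple of $R_1$ can $k_1$-dominate it. This contradicts $u' \succ_{k_1} u$, hence no dominating $s$ exists and $t$ is a \kdom skyline. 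Note that membership $v \in \ss_2$ supplies an identical second route to the contradiction via $v' \succ_{k_2} v$.

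I do not expect a genuine obstacle here; the only point needing care is the \emph{scope} against which the skyline property is measured. Because $\ss_1$ is defined relative to the whole base relation $R_1$ rather than to a single group, the offending factor $u'$ is permitted to lie in any group of $R_1$ yet still yields the contradiction, so the grouping machinery plays no role in this particular argument. Once that is pinned down, the proof reduces to a single implication that requires no case analysis and no calculation; its real purpose is to anchor the three-set decomposition as a base case before the separability assumption is relaxed in the sequel.
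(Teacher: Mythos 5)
Your proof is correct and is essentially the paper's own argument: assume a dominating joined tuple exists, use the separability assumption of this section to split $k$-domination into $k_1$- and $k_2$-domination on the base relations, and contradict the definition of $\ss_1$ (skyline in the \emph{overall} base relation, so the offending dominator may lie in any group). The paper proves it in the same way, differing only in notation and in stating both contradictions (via $\ss_1$ and $\ss_2$) where one suffices, as you also observe.
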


\begin{proof}
	Consider a joined tuple $t' = u' \Join v'$ formed by joining the tuples $u'
	\in \ss_1$ and $v' \in \ss_2$.  Assume that $t = u \Join v$ $k$-dominates
	$t'$, i.e., $t \succk t'$.  Since $u' \in \ss_1$, no tuple and, in
	particular, $u$ can $k_1$-dominate $u'$.  Similarly, $v \not\succ_{k_2}
	v'$.  Therefore, $\not\exists t, \ t \succk t'$.  Hence, $t'$ is a \kdom
	tuple.
	\hfill{}
\end{proof}

The flight combination (16,26) in Table~\ref{tab:join} is an example.

The second theorem shows the reverse: composite tuples formed by joining a base
tuple in \nn can \emph{never} be \kdom skylines.

\begin{thm}
	\label{thm:sepnn}
	The tuples in the sets $(\ss_1 \Join \nn_2)$, $(\sn_1 \Join \nn_2)$, $(\nn_1
	\Join \ss_2)$, $(\nn_1 \Join \sn_2)$, and $(\nn_1 \Join \nn_2)$ are not
	\kdom skylines.
\end{thm}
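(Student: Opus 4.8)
The plan is to establish the non-skyline property \emph{constructively}: for every tuple in these five sets I would exhibit an explicit joined tuple that $k$-dominates it. Whereas Theorem~\ref{thm:sepss} ruled out a dominator by contradiction, here the direction is reversed. First I would observe that all five sets share one feature: at least one of the two base components lies in an \nn set. Concretely, $(\nn_1 \Join \ss_2)$, $(\nn_1 \Join \sn_2)$, and $(\nn_1 \Join \nn_2)$ have their $R_1$-component in $\nn_1$, whereas $(\ss_1 \Join \nn_2)$ and $(\sn_1 \Join \nn_2)$ have their $R_2$-component in $\nn_2$. Since $R_1$ and $R_2$ play interchangeable roles, it suffices to treat the case where the $R_1$-component lies in $\nn_1$; the two remaining sets follow by the same argument with the relations swapped.

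Next I would fix a joined tuple $t' = u' \Join v'$ with $u' \in \nn_1$. By the definition of \nn, $u'$ already fails to be a \kdom skyline within its own group under $k_1$-domination, so there is a tuple $w$ lying in the same group as $u'$ with $w \succ_{k_1} u'$. The decisive step --- the one I expect to be the real crux --- is to observe that $w$ and $u'$ carry identical values on all join attributes, which is precisely what belonging to the same group means; hence $w$ inherits the join-compatibility of $u'$ with $v'$, and $w \Join v'$ is a legitimate tuple of $R = R_1 \Join R_2$. Without this grouping observation the natural dominator might not even survive the join, and this is exactly why the partition into \ss, \sn, and \nn is defined within each group rather than globally.

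Finally I would verify directly from the definition of $k$-domination that $w \Join v' \succk t'$. On the $d_1$ attributes originating from $R_1$, the relation $w \succ_{k_1} u'$ yields at least $k_1$ attributes on which $w \Join v'$ is preferred over or equal to $t'$ and at least one on which it is strictly preferred. On the $d_2$ attributes originating from $R_2$, both tuples carry the identical value $v'$, so all of them --- in particular at least $k_2$ --- are preferred over or equal, with none violated. The two parts combine to at least $k_1 + k_2 = k$ attributes that are preferred over or equal, while the single strict improvement inherited from $w \succ_{k_1} u'$ supplies the strictness clause; hence $w \Join v' \succk t'$ and $t'$ is not a \kdom skyline. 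The symmetric argument disposes of the $\nn_2$ cases, closing all five sets. The one point that still needs a little care is confirming the strict-preference requirement of $k$-domination: it holds because membership in \nn guarantees a genuinely \emph{strict} group-dominator rather than a mere tie, and that strict gain persists after joining with $v'$.
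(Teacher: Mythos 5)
Your proposal is correct and follows essentially the same route as the paper's own proof: replace the $\nn$-component $u'$ by its group-dominator (your $w$), note that membership in the same group guarantees join-compatibility with $v'$, and count $k_1$ dominated plus $k_2$ equal attributes to get $k$-domination, with the remaining cases handled by symmetry. Your additional care in verifying the strict-preference clause of $k$-domination is a small refinement the paper glosses over, but it does not change the argument.
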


\begin{proof}
	Consider a joined tuple $t' = u' \Join v'$ formed by joining the tuples $u'
	\in \nn_1$ and $v' \in R_2$.  Therefore, there must exist a tuple $u$ in
	the same group that $k_1$-dominates $u'$, i.e., $\exists u, \ u \succ_{k_1}
	u'$.  Consider the tuple $t$ formed by joining $u$ with $v'$.  Since $u$ is
	in the same group as $u'$, this tuple necessarily exists.  As $t$ dominates
	$t'$ in $k_1$ attributes and is equal in $k_2$, overall, it dominates in
	$k_1 + k_2 = k$ attributes, i.e., $t \succk t'$.  Therefore, $t'$ is not a
	\kdom skyline.  This covers the cases $(\nn_1 \Join \ss_2)$, $(\nn_1 \Join
	\sn_2)$, and $(\nn_1 \Join \nn_2)$.  The cases for $(\ss_1 \Join \nn_2)$
	and $(\sn_1 \Join \nn_2)$ are symmetrical.
	\hfill{}
\end{proof}

The flight combinations (11,24), (13,22), (14,21), (14,22), (12,23), and
(12,24) in Table~\ref{tab:join} exemplify the above cases.  For example, the
tuple (11,24) is dominated by (11,23).

However, nothing can be concluded surely about the rest of the joined tuples.

\begin{obs}
	\label{obs:sepssn}
	The tuples in the sets $(\ss_1 \Join \sn_2)$ and $(\sn_1 \Join \ss_2)$ are
	most likely to be \kdom skylines, although that is not guaranteed.
\end{obs}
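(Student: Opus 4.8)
The plan is to read this as a hedged statement rather than an exact dichotomy, and to support it in two halves: (i) a sufficient condition that forces such a joined tuple to be a \kdom skyline, which accounts for the \emph{most likely} clause, and (ii) a single counterexample drawn from Table~\ref{tab:join}, which accounts for the \emph{not guaranteed} clause. I would argue the case $(\ss_1 \Join \sn_2)$ explicitly and observe that $(\sn_1 \Join \ss_2)$ follows by symmetry, exactly as in the proof of Theorem~\ref{thm:sepnn}.

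For the first half, fix $t' = u' \Join v'$ with $u' \in \ss_1$ and $v' \in \sn_2$, and suppose a joined tuple $t = u \Join v$ $k$-dominates $t'$. Let $a$ and $b$ count the first- and second-relation attributes in which $t$ is better-or-equal to $t'$, so that $a + b \geq k = k_1 + k_2$. Because $u' \in \ss_1$ is not $k_1$-dominated by any tuple, $u$ can match-or-beat $u'$ in at most $k_1 - 1$ attributes once the strict improvement carried by $t$ is accounted for; hence $a \leq k_1 - 1$. Substituting gives $b \geq k_2 + 1$, i.e., $v$ must match-or-beat $v'$ in $k_2 + 1$ of the $d_2$ attributes. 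This is a strictly stronger requirement than the plain $k_2$-domination that already fails to remove $v'$ from its own group (recall $v' \in \sn_2$). So a challenger to $t'$ can exist only if the second relation contains a tuple beating $v'$ this emphatically, which is why $t'$ is \emph{most likely} a \kdom skyline.

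For the second half, I would point to the combination $(18,28) \in (\ss_1 \Join \sn_2)$ in Table~\ref{tab:join}, which is $k$-dominated (with $k = 6$) by $(15,25)$: flight $15$ matches-or-beats flight $18$ in the $k_1 - 1 = 2$ first-relation attributes cost and duration, while flight $25$ matches-or-beats flight $28$ in all $k_2 + 1 = 4$ second-relation attributes, so the totals reach $k$ with a strict win present. This hits precisely the borderline split $a = k_1 - 1,\ b = k_2 + 1$ anticipated by the bound, and contrasts with $(11,23) \in (\ss_1 \Join \sn_2)$, which does survive; both outcomes therefore genuinely occur.

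The main obstacle is not a computation but settling what a proof of an intentionally probabilistic statement should claim: the real deliverable is the threshold bound $b \geq k_2 + 1$, and the recognition that the counterexample works only because the second relation \emph{compensates} for attributes lost in the first --- exactly the cross-relation behaviour ruled out by the simplifying assumption at the start of this section. Thus the honest reading of the observation is that it flags the transition, once that assumption is dropped, of these two categories from ``always'' (Theorem~\ref{thm:sepss}-style) to merely ``usually''. A secondary care-point is the degenerate case in which $u$ ties $u'$ on many attributes without any strict win of its own; handling it (and, more generally, dropping the distinct-value assumption) is where the bound $a \leq k_1 - 1$ needs the extra argument, and I would defer the full treatment to the non-UVP analysis.
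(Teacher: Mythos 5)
Your proposal addresses the wrong regime, and as a consequence both halves of it miss. Obs.~\ref{obs:sepssn} sits in Sec.~\ref{sec:sep}, where the standing assumption is that a joined tuple can only be $k$-dominated \emph{relation-wise}: a challenger must (at least weakly) cover $k_1$ attributes of $R_1$ and $k_2$ attributes of $R_2$ separately. Your central device --- the split $a \leq k_1 - 1$, $b \geq k_2 + 1$, in which the second relation compensates for attributes lost in the first --- is exactly the cross-relation behaviour this assumption forbids. You noticed the tension yourself, but resolved it by re-reading the observation as describing what happens ``once that assumption is dropped''; that reading is wrong --- the dropped-assumption version is the separate Obs.~\ref{obs:ssn} in Sec.~\ref{sec:joined}, which has its own proof. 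Worse, the case you defer as a ``secondary care-point'' (a tuple $u$ that ties $u'$ on $k_1$ attributes without any strict win of its own) is not degenerate here: it is the \emph{entire} content of ``not guaranteed''. The paper's argument for a tuple $t' = u' \Join v'$ with $u' \in \ss_1$, $v' \in \sn_2$ is precisely this: since $v' \in \sn_2$, its dominator $v$ lies in a different group and hence cannot join with $u'$; since $u' \in \ss_1$, no $u$ can $k_1$-dominate $u'$, but some $u$ may \emph{equal} $u'$ on $k_1$ attributes, and if that $u$ happens to be join-compatible with $v$, then $u \Join v \succk t'$ (equal in $k_1$, strictly better in $k_2$). ``Most likely'' holds because an exact tie plus join compatibility is rare. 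By postponing ties to a UVP-style analysis (which in the paper only appears later, in Sec.~\ref{sec:uvp}, and is explicitly not assumed here), you have postponed the whole proof; note also that it is the tie case that breaks your bound $a \leq k_1 - 1$ in the first place.

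Your counterexample fails as well, under either reading. The paper's witness is $(19,25)$: flight 19 \emph{ties} 18 on the $k_1 = 3$ attributes cost, duration and amenities, and 25 $k_2$-dominates 28, so $(19,25) \succk (18,28)$. Your witness $(15,25)$ does not work: flight 15 is better than 18 in only $2 < k_1 = 3$ first-relation attributes, so under this section's assumption $(15,25)$ does not dominate $(18,28)$ at all; and even in the unrestricted regime of Sec.~\ref{sec:joined}, where the running example uses $k = 7$, it covers at most $2 + 4 = 6 < 7$ attributes. (Your claim that 25 matches-or-beats 28 in all four second-relation attributes also depends on which table you read: per Table~\ref{tab:b}, flight 28 has amenities $37 < 38$, so 25 covers only three.) Any correct counterexample must go through ties --- which is exactly the mechanism your write-up defers.
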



\begin{proof}
	Consider a joined tuple $t' = u' \Join v'$ formed by joining the tuples $u'
	\in \ss_1$ and $v' \in \sn_2$.  Consider the dominator $v \in R_2
	\succ_{k_2} v'$.  Since $v'$ is a $k_2$-dominate skyline in $R_2$, $v$ is in
	a different group than $v'$.  Therefore, $v$ cannot join with $u'$, i.e.,
	$\not~\!\exists t = u' \Join v, \ v \succ_{k_2} v'$.  Although no tuple $u$
	can $k_1$-dominate $u'$, there may exist $u$ whose $k_1$ attributes have
	the same value as $u'$.  If it happens that $u$ is join-compatible with
	$v$, then and only then, $t = u \Join v$ exists and $t \succ_k t'$ (since
	it is equal in $k_1$ and better in $k_2$ attributes).  The above situation
	is quite unlikely, although not impossible.  The case for $(\sn_1 \Join
	\ss_2)$ is symmetrical.
	\hfill{}
\end{proof}

Thus, while (11,23) and (13,21) are \kdom skylines, (18,28) is not
(Table~\ref{tab:join}).  Flight 18 has same $k_1$ values as 19 while 28 is
dominated by 25.  Therefore, (18,28) is $k$-dominated by (19,25).

The observation for $(\sn_1 \Join \sn_2)$ is similar.

\begin{obs}
	\label{obs:sepsn}
	A tuple in the set $(\sn_1 \Join \sn_2)$ may or may not be a \kdom skyline.
\end{obs}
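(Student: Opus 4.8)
The plan is to read this as an indeterminacy claim and establish both possibilities, mirroring the analysis of Observation~\ref{obs:sepssn}. First I would fix a joined tuple $t' = u' \Join v'$ with $u' \in \sn_1$ and $v' \in \sn_2$. By the definition of \sn, there is a dominator $u \succ_{k_1} u'$ lying in a \emph{different} group than $u'$, and likewise a dominator $v \succ_{k_2} v'$ in a different group than $v'$. The key reduction is that, under the simplified decomposition assumed in this section, $t'$ fails to be a \kdom skyline precisely when some \emph{valid} joined tuple $k$-dominates it; because the decomposition forces $k_1$ attributes to come from $R_1$ and $k_2$ from $R_2$, such a dominator must pair a tuple better-or-equal to $u'$ on $k_1$ attributes with a tuple better-or-equal to $v'$ on $k_2$ attributes, where these two halves share the join key (so that the dominator itself exists). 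As in Observation~\ref{obs:sepssn}, the pairing may use \emph{surrogate} dominators: a tuple that merely ties $u'$ on the relevant $k_1$ attributes can still participate, provided the strict improvement is supplied on the $R_2$ side (and symmetrically). Whether such an admissible, join-compatible pair exists depends entirely on the data, so neither outcome can be ruled out in general.

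To finish, I would exhibit witnesses for both cases directly from Table~\ref{tab:join}. The tuple $(15,25) \in (\sn_1 \Join \sn_2)$ is a \kdom skyline: the relevant dominators of its base tuples ($11$ for $15$ and $21$ for $25$) sit in incompatible groups (C and D), the only other join-compatible tuple in its own group, $(19,25)$, is better-or-equal on too few attributes of $R_1$, and no admissible surrogate pairing shares a join key; hence nothing $k$-dominates it. By contrast, $(17,27) \in (\sn_1 \Join \sn_2)$ is \emph{not} a \kdom skyline, since $(16,26) \in (\ss_1 \Join \ss_2)$ is a valid joined tuple that beats it in every attribute of both relations and therefore $k$-dominates it. Together these two witnesses establish that a tuple in $(\sn_1 \Join \sn_2)$ may or may not be a \kdom skyline.

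I expect the main obstacle to be the bookkeeping behind the ``skyline'' witness: correctly characterizing \emph{every} admissible pair of base-relation dominators (strict versus merely tying, and across all groups) that could yield a genuine $k$-dominating join, so that $(15,25)$ is shown to be undominated by all of them and not merely by the obvious strict dominators. This is the same subtlety that made Observation~\ref{obs:sepssn} delicate, and it is exactly where a careless argument could wrongly declare the tuple dominated.
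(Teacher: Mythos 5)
Your proposal is correct and follows essentially the same route as the paper's own proof: by the definition of \sn, any strict dominator of $u'$ (resp.\ $v'$) lies in a different group, so whether $t' = u' \Join v'$ is $k$-dominated reduces to whether some dominating pair is join compatible, and the two outcomes are witnessed by $(15,25)$ and $(17,27)$ in Table~\ref{tab:join}, exactly the examples the paper uses.

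Where you differ is in the treatment of tie-based (``surrogate'') pairings, and here your version is actually the more complete one. The paper's proof considers only the strict dominators $u \succ_{k_1} u'$ and $v \succ_{k_2} v'$ and asserts that if this pair is not join compatible then ``no such $t$ exists'' and $t'$ is a \kdom skyline; taken literally, that step overlooks precisely the case you flag, namely a tuple that merely ties $u'$ on the $k_1$ attributes joining with a strict dominator of $v'$ (or symmetrically), which would also $k$-dominate $t'$. The paper invokes this mechanism in Obs.~\ref{obs:sepssn} --- it is why $(18,28)$ fails to be a skyline there --- but silently drops it in the proof of this observation. Your argument, including the check that $(15,25)$ admits no surrogate pairing sharing a join key and that $(19,25)$ covers too few attributes of $R_1$, closes that small gap; since the observation only claims ``may or may not,'' the paper's conclusion is unaffected, but your bookkeeping is the right way to make the ``skyline'' witness airtight.
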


\begin{proof}
	Consider a joined tuple $t' = u' \Join v' \in (\sn_1 \Join \sn_2)$.
	Consider the tuples $u$ and $v$ such that $u \succ_{k_1} u'$ and $v
	\succ_{k_2} v'$.  Since $u' \in \sn_1$, $u$ is in a different group than
	$u'$.  Similarly, $v$ is in a different group than $v'$.  If and only if
	$u$ and $v$ are join compatible, they will join to form $t = u \Join v$
	which then $k$-dominates $t'$.  Otherwise, no such $t$ exists, and $t'$
	is a \kdom skyline.
	\hfill{}
\end{proof}

Consider (15,25) in Table~\ref{tab:join}.  Since its dominators, flights 11 and
21 respectively, are not join compatible (11 reaches city C while 21 takes off
from city D), the flight combination (11,21) is not valid.  Consequently, no
joined tuple dominates it, and (15,25) becomes a \kdom skyline.  On the other
hand, for (17,27), the dominators 16 and 26 do join (the city F is common) to
form the tuple (16,26).  As a result, (17,27) is not a \kdom skyline.

The overall situation is summed up in Table~\ref{tab:sepsubsets}.

\begin{table}[t]
	\centering
	\begin{tabular}{|c||c|c|c|}
		\hline
		& $\ss_2$ & $\sn_2$ & $\nn_2$ \\
		\hline
		\hline
		$\ss_1$ & yes (Th. \ref{thm:sepss}) & likely (Obs. \ref{obs:sepssn}) & no (Th. \ref{thm:sepnn}) \\
		\hline
		$\sn_1$ & likely (Obs. \ref{obs:sepssn}) & may be (Obs. \ref{obs:sepsn}) & no (Th. \ref{thm:sepnn}) \\
		\hline
		$\nn_1$ & no (Th. \ref{thm:sepnn}) & no (Th. \ref{thm:sepnn}) & no (Th. \ref{thm:sepnn}) \\
		\hline
	\end{tabular}
	\tabcaption{Fate of \kdom skylines.}
	\label{tab:sepsubsets}
\end{table}

\subsection{Skyline Attributes in Joined Relation}
\label{sec:joined}

We next do away with the assumption that $k_1$ and $k_2$ attributes have to be
satisfied separately from the first and second relations.  It is simply
required that the joined relation returns \kdom skylines with no restriction on
how $k$ is broken up between the base relations.  However, to ensure that the
skyline preferences are respected for \emph{at least} one attribute in every
base relation, we assume that $k > \max\{d_1, d_2\}$.

A brute-force way to find the skylines is to generate all combinations of $k_1$
and $k_2$ such that $k_1 + k_2 = k$ and, then, combine the answer sets using
the results obtained in the previous section.  However, it can be done more
efficiently as explained next.

We consider the following cases:
\begin{align}
	\koned = k - d_2 \qquad \ktwod = k - d_1
\end{align}

For all combinations of $k_1$ and $k_2$ such that $k_1 + k_2 = k$, the
inequalities, $1 \leq \koned \leq k_1 \leq d_1$ and $1 \leq \ktwod \leq k_2
\leq d_2$, hold.

Consider the example in Table~\ref{tab:join}.  If $k = 7$, then $\koned = \ktwod
= 3$.  Thus, categorization and \kdom skyline sets remain the same.

We first establish the following lemma on the monotonicity of the \emph{number
of skyline attributes}.

\begin{lem}
	\label{lem:monotonic}
	If tuple $u$ is a $j$-dominant skyline, it is also a $i$-dominant skyline
	for any $i \geq j$.
\end{lem}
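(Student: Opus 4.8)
The plan is to prove the contrapositive via a simple monotonicity property of the domination relation itself. The core observation is that, for a fixed ordered pair of tuples, the relation $\succ_i$ becomes \emph{harder} to satisfy as the index grows; equivalently, $w \succ_i u$ implies $w \succ_j u$ whenever $i \geq j$. Once this is established, the statement follows immediately: were $u$ \emph{not} an $i$-dominant skyline, some $w$ would satisfy $w \succ_i u$, hence $w \succ_j u$, contradicting that $u$ is a $j$-dominant skyline.

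First I would unpack the definition of $\succ_k$. Recall that $w \succ_k u$ requires two things: (i) $w$ is preferred-over-or-equal-to $u$ in at least $k$ of the $d$ skyline attributes, and (ii) $w$ is \emph{strictly} preferred in at least one attribute. Now fix any $w$ with $w \succ_i u$ and assume $i \geq j$. Condition (i) exhibits a set of at least $i$ attributes on which $w$ is preferred-or-equal; since $i \geq j$, those very attributes already number at least $j$, so condition (i) holds for the index $j$ as well. Condition (ii) makes no reference to the index and therefore carries over verbatim. Hence $w \succ_j u$, which is exactly the monotonicity claim.

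With the monotonicity in hand, I would close by contradiction. Assume $u$ is a $j$-dominant skyline but, for some $i \geq j$, fails to be an $i$-dominant skyline. The latter means there exists a tuple $w$ with $w \succ_i u$; by the argument above, $w \succ_j u$, so $u$ is $j$-dominated, contradicting the hypothesis. Therefore $u$ must be an $i$-dominant skyline for every $i \geq j$.

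I do not anticipate a genuine obstacle, as the heart of the argument is a one-line index-shift. The only point requiring care is the second clause of the domination definition: one must check that the ``at least one strictly preferred attribute'' requirement does not interfere with raising the index. This is immediate precisely because that clause does not mention $k$, so it transfers unchanged; the entire monotonicity rests solely on the ``at least $k$'' counting condition of clause (i).
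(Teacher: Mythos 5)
Your proposal is correct and follows essentially the same route as the paper: both argue by contradiction that a hypothetical $i$-dominator of $u$ would also $j$-dominate $u$ (since the $\geq i$ witnessing attributes contain a subset of $\geq j$ attributes), contradicting the $j$-dominant skyline hypothesis. Your treatment is in fact slightly more careful than the paper's, since you explicitly check that the strict-preference clause is unaffected by lowering the index.
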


\begin{proof}
	Consider $u$ to be a $j$-dominant tuple.  Assume that, however, it is not
	$i$-dominant for some $i > j$.  Thus, there exists $v$ that $i$-dominates
	$u$, i.e., $v$ is better than $u$ in $i$ attributes.  Then, $v$ must be
	better than $u$ in some subset $j$ of these $i$ attributes, which is a
	contradiction.  Thus, $u$ must be $i$-dominant for every $i \geq j$.
	\hfill{}
\end{proof}

The following theorems and observations implicitly use
Lemma~\ref{lem:monotonic}.

\begin{thm}
	\label{thm:ss}
	The tuples in the set $(\ss_1 \Join \ss_2)$ are \kdom skylines.
\end{thm}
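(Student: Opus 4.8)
The plan is to argue by contradiction, mirroring the proof of Theorem~\ref{thm:sepss} but with the fixed split $(k_1,k_2)$ replaced by the threshold pair $(\koned,\ktwod)$, and with Lemma~\ref{lem:monotonic} invoked to bridge the resulting gap. Recall that in this section $\ss_1$ consists of $\koned$-dominant skylines of $R_1$ and $\ss_2$ of $\ktwod$-dominant skylines of $R_2$. So I would take $t' = u' \Join v'$ with $u' \in \ss_1$ and $v' \in \ss_2$, assume some joined tuple $t = u \Join v$ satisfies $t \succk t'$, and derive a contradiction.

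First I would count attributes. Let $a_1$ be the number of relation-1 attributes in which $u$ is weakly preferred over $u'$, and $a_2$ the analogous count for relation~2. Since $t \succk t'$, the weakly-preferred attributes total at least $k$, so $a_1 + a_2 \geq k$. The key step---and the reason $\koned$ and $\ktwod$ are defined as $k - d_2$ and $k - d_1$---is the observation that $a_2 \leq d_2$ forces $a_1 \geq k - d_2 = \koned$, and symmetrically $a_2 \geq \ktwod$. This is exactly the minimal per-relation count that any successful $k$-dominator must achieve, independent of how $k$ happens to split across the two relations.

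Next I would locate the strict improvement guaranteed by $\succk$. If it lies in a relation-1 attribute, then $u$ is weakly preferred over $u'$ in $a_1 \geq \koned$ attributes with at least one strict improvement, so $u \succ_{a_1} u'$. But $u' \in \ss_1$ is a $\koned$-dominant skyline, hence by Lemma~\ref{lem:monotonic} an $a_1$-dominant skyline (as $a_1 \geq \koned$), so nothing can $a_1$-dominate it---a contradiction. If instead every strict improvement lies in relation~2, the symmetric argument using $v' \in \ss_2$, $a_2 \geq \ktwod$, and Lemma~\ref{lem:monotonic} yields the contradiction. Since the strict improvement must fall in one of the two relations, one of the cases always applies, so no such $t$ exists and $t'$ is a \kdom skyline.

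The main obstacle is the bookkeeping around strict domination rather than the counting itself: establishing $a_1 + a_2 \geq k$ and hence $a_1 \geq \koned$, $a_2 \geq \ktwod$ is routine, but I must be careful to attribute the single guaranteed strict improvement to the correct base relation so that the matching $\ss$-membership and the right monotonicity threshold can be invoked. Handling the possibility that the strict improvement could lie in either relation---without double counting it or overlooking the case where relation~1 contributes only ties---is where the argument needs the most care.
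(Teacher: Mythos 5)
Your proof is correct. It rests on the same counting fact as the paper's proof---any hypothetical dominator $t = u \Join v$ of $t' = u' \Join v'$ must weakly dominate in at least $\koned = k - d_2$ attributes on the $R_1$ side and at least $\ktwod = k - d_1$ attributes on the $R_2$ side---but the case decomposition is genuinely different. The paper splits on a property of the data: case (i), no tuple of $R_1$ shares $\koned$ attribute values with $u'$, in which case no tuple can weakly dominate $u'$ in $\koned$ attributes at all; case (ii), such a sharer exists, in which case the $R_1$ contribution is bounded by $d_1 - 1$ and the burden of $\ktwod + 1$ weakly dominated attributes is pushed onto $R_2$, contradicting $v' \in \ss_2$. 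You instead split on where the dominator's guaranteed \emph{strict} improvement lies, contradicting $u' \in \ss_1$ or $v' \in \ss_2$ accordingly (with Lemma~\ref{lem:monotonic} bridging from $\koned$, $\ktwod$ up to $a_1$, $a_2$; one could equally skip the lemma and note $u \succ_{a_1} u' \Rightarrow u \succ_{\koned} u'$). Your decomposition is tighter exactly where you flagged that care is needed: the paper's claim that $u$ can dominate $u'$ ``in at most $d_1 - 1$ attributes (otherwise $u' \not\in \ss_1$)'' silently excludes an exact duplicate of $u'$, and its case-(ii) conclusion that $R_2$-side domination in $\ktwod + 1$ attributes ``is impossible since $v' \in \ss_2$'' tacitly assumes the strict preference falls on the $R_2$ side; your strict-location split absorbs both situations (ties-only contributions, duplicates) with no extra cases. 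What the paper's formulation buys in exchange is that its ``sharing $\koned$ attributes'' condition is precisely the notion reused in Obs.~\ref{obs:ssn}, Th.~\ref{thm:uvpssn}, and the Augment/target-set machinery of the algorithms, so its proof doubles as motivation for those later constructions.
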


\begin{proof}
	Consider $t' = u' \Join v' \in (\ss_1 \Join \ss_2)$.  There are two cases
	to consider: (i)~when there does not exist any tuple $u$ that shares
	$\koned$ attributes with $u'$, and (ii)~when there exists a tuple $u$ that
	share $\koned$ or more attributes with $u'$.
	
	In the first case, if $\exists t, \ t \succk t'$, then $t$ must dominate
	$t'$ in at least $\koned$ attributes corresponding to $u'$ since it can
	dominate $t$ in at most $d_2$ attributes corresponding to $v'$.  Since this
	is impossible, $t'$ is a \kdom skyline.

	In the second case, $u'$ can be dominated by $u \in R_1$ in at most $d_1 -
	1$ attributes (otherwise $u' \not\in \ss_1$).  Thus, to dominate $t'$, $u$
	must combine with $v \in R_2$ such that $v$ dominates $v'$ in at least $k -
	(d_1 - 1) = k_2 + 1$ attributes, which is impossible since $v' \in \ss_2$.
	Hence, $t'$ is a \kdom skyline.
	\hfill{}
\end{proof}

\begin{table}[t]
	\centering
	\begin{tabular}{|c||c|c|c|}
		\hline
		& $\ss_2$ & $\sn_2$ & $\nn_2$ \\
		\hline
		\hline
		$\ss_1$ & yes (Th. \ref{thm:ss}) & likely (Obs. \ref{obs:ssn}) & no (Th. \ref{thm:nn}) \\
		\hline
		$\sn_1$ & likely (Obs. \ref{obs:ssn}) & may be (Obs. \ref{obs:sn}) & no (Th. \ref{thm:nn}) \\
		\hline
		$\nn_1$ & no (Th. \ref{thm:nn}) & no (Th. \ref{thm:nn}) & no (Th. \ref{thm:nn}) \\
		\hline
	\end{tabular}
	\tabcaption{Joins of groups.}
	\label{tab:subsets}
\end{table}

\begin{thm}
	\label{thm:nn}
	The tuples in the sets $(\ss_1 \Join \nn_2)$, $(\sn_1 \Join \nn_2)$, $(\nn_1
	\Join \ss_2)$, $(\nn_1 \Join \sn_2)$, and $(\nn_1 \Join \nn_2)$ are not
	\kdom skylines.
\end{thm}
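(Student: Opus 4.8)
The plan is to reuse the construction behind Theorem~\ref{thm:sepnn}, now read through the categorization of this section, in which $R_1$ is split into $\ss_1,\sn_1,\nn_1$ by \koned-domination and $R_2$ into $\ss_2,\sn_2,\nn_2$ by \ktwod-domination. The guiding idea is that whenever one factor of a joined tuple lies in an \nn set, I can improve that factor \emph{within its own group} while leaving the other factor untouched; the untouched factor then contributes an exact match on \emph{all} of its attributes, and that alone is enough to push the combined count of preferred-or-equal attributes up to $k$.

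First I would handle the three sets that carry an \nn factor on the first relation, namely $(\nn_1 \Join \ss_2)$, $(\nn_1 \Join \sn_2)$, and $(\nn_1 \Join \nn_2)$. Take any $t' = u' \Join v'$ with $u' \in \nn_1$ and $v'$ arbitrary. By the definition of \nn, $u'$ is not a \koned-dominant skyline in its own group, so there is a tuple $u$ in the \emph{same group} with $u \succ_{\koned} u'$. Because $u$ and $u'$ agree on every join attribute, $u$ is join-compatible with exactly the tuples $u'$ is, so $t = u \Join v'$ exists. On the $d_1$ attributes of the first relation, $t$ is preferred or equal to $t'$ in at least \koned of them (with at least one strict, inherited from $u \succ_{\koned} u'$), while on the $d_2$ attributes of the second relation $t$ equals $t'$ exactly. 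Using $\koned = k - d_2$, this gives at least $\koned + d_2 = k$ attributes in which $t$ is preferred or equal, with a strict one among them; that is, $t \succk t'$. Hence $t'$ is not a \kdom skyline.

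The remaining two sets, $(\ss_1 \Join \nn_2)$ and $(\sn_1 \Join \nn_2)$, follow by the mirror-image argument with the roles of the relations exchanged: now $v' \in \nn_2$ supplies an in-group dominator $v \succ_{\ktwod} v'$, I keep $u'$ fixed so the first relation contributes all $d_1$ attributes equal, and $\ktwod = k - d_1$ yields $d_1 + \ktwod = k$. The overlap case $(\nn_1 \Join \nn_2)$ is already covered above, so together these exhaust the five sets in the statement.

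The one step that needs care is the existence of the replacement tuple $t$: the whole argument hinges on the dominator $u$ (resp. $v$) lying in the same group as the \nn factor, which is precisely why the \nn sets are defined by in-group domination rather than overall domination. I would therefore state explicitly that group membership means equality on all join attributes, so anything joining with $u'$ also joins with $u$. Everything else is the bookkeeping $\koned + d_2 = d_1 + \ktwod = k$, immediate from the definitions of \koned and \ktwod, together with the remark that the strict-preference requirement of $\succk$ is inherited from the strict preference already present in the \koned- (resp. \ktwod-)domination. Note that, unlike Theorem~\ref{thm:ss}, this direction does not actually invoke Lemma~\ref{lem:monotonic}, since the dominating joined tuple is produced by a single explicit construction.
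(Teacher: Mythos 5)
Your proof is correct and takes essentially the same approach as the paper's: replace the \nn factor by its in-group \koned- (resp.\ \ktwod-) dominator, keep the other factor fixed so it contributes equality on all of its attributes, and count $\koned + d_2 = d_1 + \ktwod = k$ preferred-or-equal attributes with one strict. You merely make explicit the join-compatibility and symmetry details that the paper's terser proof leaves implicit.
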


\begin{proof}
	In each of the sets, at least one base tuple is in \nn.  Consider a joined
	tuple $t' = u' \Join v'$ where $u' \in \nn_1$.  Surely, $\exists u, \ u
	\succ_{\koned} u'$ exists and $u$ is in the same group as $u'$.  Therefore,
	$u$ is join compatible with $v'$.  Consider $t = u \Join v'$.  It dominates
	$t'$ in $\koned + d_2 = k$ attributes.  Thus, $t'$ is not a \kdom skyline.
	\hfill{}
\end{proof}

For example, (16,26) $\in \ss_1 \Join \ss_2$ and (14,22) $\in \nn_1 \Join
\nn_2$ (in Table~\ref{tab:join}) is a \kdom skyline and not a \kdom skyline
respectively.

\begin{obs}
	\label{obs:ssn}
	A tuple in the set $(\ss_1 \Join \sn_2)$ or $(\ss_2 \Join \sn_1)$ is most
	likely to be a \kdom skyline, although that is not guaranteed.
\end{obs}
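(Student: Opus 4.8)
The plan is to mirror the proof of Observation~\ref{obs:sepssn}, re-deriving it for the relaxed setting in which $\ss_1,\sn_1,\nn_1$ are defined by $\koned$-domination and $\ss_2,\sn_2,\nn_2$ by $\ktwod$-domination. Take $t' = u' \Join v'$ with $u' \in \ss_1$ and $v' \in \sn_2$, and suppose, for the sake of locating a possible dominator, that some $t = u \Join v$ satisfies $t \succk t'$. First I would exploit membership in $\sn_2$: since $v'$ is a $\ktwod$-dominant skyline only within its own group, every $v$ with $v \succ_{\ktwod} v'$ lies in a \emph{different} group and therefore cannot be joined with $u'$. Hence $u \neq u'$, and in fact $u$ must sit in the same group as $v$ for the join to be valid.

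Next I would set up an attribute-counting argument. Let $p$ be the number of $R_1$-attributes on which $u$ is preferred-or-equal to $u'$ and $q$ the corresponding count on the $R_2$ side between $v$ and $v'$; domination $t \succk t'$ requires $p + q \geq k$ together with at least one strict improvement. The crucial consequence of $u' \in \ss_1$ is that no tuple $u$ can $\koned$-dominate $u'$, and by Lemma~\ref{lem:monotonic} the same holds at every threshold $p \geq \koned$. Hence whenever $p \geq \koned$ the tuple $u$ must be \emph{equal} to $u'$ on all of those attributes, since a single strict gain among $\koned$ preferred-or-equal attributes would already yield $u \succ_{\koned} u'$. Consequently any strict coordinate that $\succk$ demands of $t$ over $t'$ has to be contributed by the $R_2$ side alone.

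I would then split on $p$. If $p < \koned = k - d_2$, then even when $v$ is preferred-or-equal on all $d_2$ attributes we get $p + q < (k - d_2) + d_2 = k$, so no dominating $t$ exists and $t'$ is a \kdom skyline. The only way to escape is the complementary regime $p \geq \koned$, and this regime requires two coincidences at once: (i)~a tuple $u$ that \emph{ties} $u'$ on at least $\koned$ attributes---a repeated value in $R_1$, which violates uniqueness of the base values---and (ii)~that this very $u$ be join-compatible with a dominator $v$ of $v'$ taken from $v'$'s foreign group. I would conclude that the simultaneous occurrence of a value-tie and this exact join coincidence is unlikely though not impossible, which is precisely the content of the observation; the case $(\sn_1 \Join \ss_2)$, i.e.\ $(\ss_2 \Join \sn_1)$, then follows by the symmetry between the two relations.

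The step I expect to be the main obstacle is pinning down the tie condition correctly. Because the definition of $\succ_{\koned}$ counts an attribute as preferred-or-equal even when it is strictly preferred, the ``no strict gain'' conclusion I draw from $u' \in \ss_1$ must be applied only to the preferred-or-equal attributes, and I have to verify that the strict coordinate mandated by $\succk$ can legitimately come from $v$ on its own. Keeping the arithmetic with $\koned$ consistent at the boundary---using $1 \leq \koned \leq d_1$---is the delicate bookkeeping that the earlier separate-attribute version in Observation~\ref{obs:sepssn} never had to face.
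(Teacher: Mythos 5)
Your proposal is correct and follows essentially the same route as the paper's proof: a case split on whether some tuple of $R_1$ ties $u'$ on at least $\koned$ attributes, the counting bound $(\koned - 1) + d_2 = k - 1 < k$ in the first case, and in the second case the observation that any strict gain must come from the $R_2$ side via a dominator of $v'$ from a foreign group, making domination depend on the double coincidence of a value tie plus join compatibility. Your explicit $p$/$q$ bookkeeping is in fact slightly more careful than the paper's version, which tacitly assumes the tie is on exactly $\koned$ attributes, but the substance of the argument is identical.
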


\begin{proof}
	Consider $t' = u' \Join v' \in (\ss_1 \Join \sn_2)$.  There are two cases
	to consider.  In the first case when there does not exist any tuple $u$
	that share $\koned$ attributes with $u'$, a joined tuple $t$ can dominate
	$t'$ in at most $\koned - 1 + d_2 = k - 1$ attributes.  Therefore, $t'$ is
	a \kdom skyline.

	In the second case, assume that such a $u$ with equal $\koned$ attributes
	exists.  Note, however, that $u$ cannot be better in any other attribute as
	then $u'$ would be $\koned$-dominated by $u$.  Since $v' \in \sn_2$, there
	exists a $v \in R_2$ that dominates $v'$ but is in a different group.  If
	and only if $u$ and $v$ are join compatible, the joined tuple $t = u \Join
	v$ exists.  Comparing $t$ against $t'$, we see that the \koned attributes
	corresponding to $u'$ (or $u$) are same.  Therefore, for $t'$ to
	$k$-dominate $t$, $v'$ must dominate $v$ in all the $k - \koned = d_2$
	attributes.  This is unlikely, although not impossible.

	The proof for $(\ss_2 \Join \sn_1)$ is similar.
	\hfill{}
\end{proof}

For example, consider (18,28) $\in \ss_1 \Join \sn_2$ in Table~\ref{tab:join}
against (19,25).  While $\koned = 3$ attributes for 18 and 19 are same, 25
dominates 28 in $d_2 = 4$ attributes.  Thus, overall, (19,25) dominates (18,28)
in $3 + 4 = 7$ attributes and, thus, (18,28) is not a \kdom skyline.  On the
other hand, (11,23) $\in \ss_1 \Join \sn_2$ is a \kdom skyline since there is
no tuple that shares $\koned = 3$ attributes with 11.

\begin{obs}
	\label{obs:sn}
	A tuple in the set $(\sn_1 \Join \sn_2)$ may or may not be a \kdom skyline.
\end{obs}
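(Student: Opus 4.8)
The plan is to follow the two-case template of Obs.~\ref{obs:ssn}, but I must now track the \emph{number} of dominated attributes on each side, since in the relaxed regime $\koned + \ktwod = 2k-d < k$ (whenever $k < d$), so weak domination of the two halves no longer forces $k$-domination of the join. I would fix $t' = u' \Join v' \in (\sn_1 \Join \sn_2)$, sharing the join key $X$, and first characterize any would-be dominator $t = \hat{u} \Join \hat{v}$ over a key $Y$. If $\hat{u}$ is better-or-equal to $u'$ in $a_1$ attributes and $\hat{v}$ is better-or-equal to $v'$ in $a_2$, then $t \succk t'$ forces $a_1 + a_2 \geq k$, and together with $a_1 \leq d_1$, $a_2 \leq d_2$ this yields $a_1 \geq \koned$ and $a_2 \geq \ktwod$.

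The crux is that the single join key couples the two sides: $\hat{u}$ shares $u'$'s group iff $Y = X$, and in that case $\hat{v}$ necessarily shares $v'$'s group as well (symmetrically for $Y \neq X$). I would dispose of $Y = X$ first. Here $a_1 \geq \koned$ and $a_2 \geq \ktwod$ still hold, yet $\hat{u}$ lies in $u'$'s group; since $u' \in \sn_1$ is a $\koned$-skyline within its group, $\hat{u}$ can be better-or-equal in $\geq \koned$ attributes only through \emph{ties}---a single strict gain would give $\hat{u} \succ_{\koned} u'$, contradicting $u' \in \sn_1$---and symmetrically $\hat{v}$ ties $v'$ using $v' \in \sn_2$. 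Thus a same-key $t$ merely ties $t'$ on its qualifying attributes and is worse elsewhere, so it cannot dominate $t'$. Attention therefore collapses to keys $Y \neq X$, where $\hat{u}$ and $\hat{v}$ both fall \emph{outside} the groups of $u'$ and $v'$; there, $t'$ fails to be a skyline \emph{exactly} when some such $Y$ carries a join-compatible pair $(\hat{u}, \hat{v})$ with $a_1 + a_2 \geq k$. As this existence is purely data-dependent, both verdicts can occur, which is the claim. I would confirm both branches on Table~\ref{tab:join}: (15,25) survives because its component dominators 11 and 21 lie under \emph{different} keys (C and D) and each dominate in only $\koned = \ktwod = 3$ attributes ($6 < 7$), while (17,27) falls because key F carries the compatible pair 16, 26, each dominating in all four attributes ($8 \geq 7$), so (16,26) $k$-dominates (17,27).

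The main obstacle is exactly this relaxed gap: unlike the separable Obs.~\ref{obs:sepsn}, join-compatibility of the component dominators is \emph{not} sufficient on its own, so I cannot equate ``compatible dominators exist'' with ``not a skyline.'' The delicate step is the coupling argument that retires the same-key, tie-only candidates cleanly; without it one would have to reason separately about near-dominators that merely match $u'$ or $v'$ on their qualifying attributes. Once those are shown harmless, the ``may or may not'' reduces to the clean, data-dependent existence of an alternate-key compatible pair of combined strength at least $k$, which the two example tuples realize in the negative and affirmative respectively.
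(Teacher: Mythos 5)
Your proof is correct, and it rests on the same basic mechanism as the paper's: the decisive dominator candidates live outside the groups of $u'$ and $v'$, so whether $t'$ survives reduces to the data-dependent existence of a join-compatible pair of sufficient combined strength, and both outcomes are realized by (15,25) and (17,27) in Table~\ref{tab:join}. However, your version is genuinely more complete. The paper's proof considers only tuples $u, v$ that actually $\koned$- and $\ktwod$-dominate $u'$ and $v'$, notes that they lie in other groups, and observes that even when they join, the guaranteed $\koned + \ktwod = 2k - d < k$ attributes need not suffice; it silently ignores candidate dominators assembled from \emph{same-group} tuples that merely tie $u'$ or $v'$ on their qualifying attributes. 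Your counting step ($a_1 + a_2 \geq k$ together with $a_1 \leq d_1$, $a_2 \leq d_2$ forces $a_1 \geq \koned$ and $a_2 \geq \ktwod$) plus the coupling of the two sides through the single join key closes exactly that gap: any same-key candidate pair can qualify only through ties---one strict gain would contradict $u' \in \sn_1$ or $v' \in \sn_2$---and a tie-only pair cannot supply the strict preference that $k$-domination requires. This is the same phenomenon the paper confronts in Obs.~\ref{obs:ssn} and later handles algorithmically by augmenting target sets with value-sharing tuples, so making it explicit here is a real strengthening. One small imprecision: in the $Y \neq X$ branch, your ``exactly when'' characterization should also require at least one strict preference among the $a_1 + a_2 \geq k$ qualifying attributes (an all-ties pair dominates nothing); this does not affect the observation's ``may or may not'' conclusion, which your two examples establish.
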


\begin{proof}
	Consider a joined tuple $t' = u' \Join v' \in (\sn_1 \Join \sn_2)$.
	Consider the tuples $u$ and $v$ that dominate $u'$ and $v'$ respectively.
	Since $u$ and $v$ are in different groups than $u'$ and $v'$, they may not
	be join compatible.  If they are, the joined tuple $t = u \Join v$
	dominates $t'$ in at least $\koned + \ktwod$ attributes.  The tuple $t$ may
	additionally dominate $t'$ in some other attributes such that it overall
	$k$-dominates $t'$.  If these two conditions are not met, $t'$ becomes a
	\kdom skyline.
	\hfill{}
\end{proof}

Out of the two tuples in $\sn_1 \Join \sn_2$ in table~\ref{tab:join}, while
(15,25) is a \kdom skyline because the dominators 11 and 21 are not in the same
group and, therefore, cannot join, the flight combination (17,27) is not a
\kdom skyline as the dominators 16 and 26 join to form (16,26) which overall
$k$-dominates (17,27).

Table~\ref{tab:subsets} summarizes the situation.

It is important to note that for determining the groups in the base relations,
it must be the minimum number of attributes considered, i.e., $\koned$ and
$\ktwod$; otherwise, the correctness of the above theorems may be violated.

\subsection{Unique Value Property}
\label{sec:uvp}

For a tuple in $\ss_1 \Join \sn_2$ (and $\sn_2 \Join \ss_1$) to be \emph{not} a
\kdom skyline, the number of attributes in which $u \in R_1$ is same as $u' \in
\ss_1$ must be at least $\koned$.  If the base relations follow a \emph{unique
value property} where it is guaranteed that for any $k'_i \ (i = 1,2)$ number
of attributes, two tuples will be unique, then the processing becomes simpler.

\begin{defn}[Unique Value Property]
	A relation $R$ has the \emph{unique value property (UVP)} with respect to
	$i$ if for each subset of $i$ skyline attributes of $R$, all the tuples are
	unique, i.e., no two tuple will have exactly the same values in any
	$i$-sized subset of attributes.
\end{defn}

If $k'_i = 1$, every tuple for any attribute must be unique.  While real-valued
attributes generally follow that, categorical attributes do not.  However, for
reasonable values of \koned and \ktwod, real datasets having a mix of both
real-valued and categorical attributes generally follow the UVP.

The UVP is extremely useful since it ensures that the tuples in $\ss_1 \Join
\sn_2$ and $\sn_2 \Join \ss_1$ become \kdom skylines as shown next.

\begin{thm}
	\label{thm:uvpssn}
	If relations $R_1$ and $R_2$ follow UVP with respect to \koned and \ktwod
	attributes respectively, the tuples in the sets $(\ss_1 \Join \sn_2)$ and
	$(\sn_1 \Join \ss_2)$ are $k$-dominant skylines.
\end{thm}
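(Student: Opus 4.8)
The plan is to leverage Obs.~\ref{obs:ssn}, which already isolated the \emph{only} way a tuple in $(\ss_1 \Join \sn_2)$ can fail to be a \kdom skyline: there would have to exist a \emph{distinct} tuple $u \in R_1$ agreeing with $u'$ on at least \koned attributes, join-compatible with a cross-group dominator of $v'$. The hypothesis ``UVP with respect to \koned'' is precisely the statement that forbids any such $u$. So the theorem should follow by adding UVP to the case analysis of that observation and showing it eliminates exactly the ``unlikely but not impossible'' branch, leaving only the branch where $t'$ is already guaranteed to be a skyline.

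Concretely, I would take $t' = u' \Join v' \in (\ss_1 \Join \sn_2)$ and suppose, for contradiction, that some joined tuple $t = u \Join v$ satisfies $t \succk t'$. First I would count attributes: the $R_2$-side can contribute at most $d_2$ preferred-or-equal attributes, so the $R_1$-side must contribute at least $k - d_2 = \koned$ of them. Since $u' \in \ss_1$ cannot be \koned-dominated, $u$ cannot be \emph{strictly} better than $u'$ in any of these attributes, for otherwise $u \succ_{\koned} u'$, a contradiction. Hence $u$ must \emph{equal} $u'$ on at least \koned attributes, and UVP of $R_1$ with respect to \koned then forces $u = u'$.

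With $u = u'$ established, the rest mirrors the grouping argument. Because $u = u'$ is equal on all $d_1$ of its attributes, the strict improvement required for domination must come entirely from the $R_2$-side, and the count $d_1 + a_2 \geq k$ gives $a_2 \geq k - d_1 = \ktwod$; thus $t$ dominates $t'$ only if $v \succ_{\ktwod} v'$. But $v' \in \sn_2$ means every $\ktwod$-dominator $v$ lies in a \emph{different} group than $v'$, hence carries a different join-attribute value and is \emph{not} join-compatible with $u' = u$, so the join $t = u' \Join v$ does not exist. This contradiction shows no dominator of $t'$ exists, so $t'$ is a \kdom skyline. The set $(\sn_1 \Join \ss_2)$ is then handled by the symmetric argument, interchanging the two relations and invoking UVP of $R_2$ with respect to \ktwod.

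The main subtlety, and where I expect the real work to lie, is that UVP does not by itself eliminate a dominator; it only collapses the dangerous ``shared-\koned-attributes'' branch down to the single surviving possibility $u = u'$. The argument must then kill that possibility not by dominance counting but by \emph{join-compatibility}, using the fact that the $R_2$-dominator guaranteed by $v' \in \sn_2$ is forced out of $v'$'s group. As a sanity check I would revisit the worked example: the pair $(18,28)$ fails to be a skyline precisely because flights $18$ and $19$ share $\koned = 3$ attributes --- exactly the UVP violation this theorem excludes --- which confirms that the hypothesis is both genuinely used and tight.
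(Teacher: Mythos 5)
Your proof is correct and takes essentially the same route as the paper: the paper's own proof simply cites Obs.~\ref{obs:ssn} and notes that UVP precludes its second (``shared \koned attributes'') case, which is exactly the dichotomy you unfold. If anything, your version is more complete---by using UVP to collapse the dangerous tuple to $u = u'$ itself and then eliminating that residual case via join-compatibility with the out-of-group dominator of $v'$, you make explicit a subcase that the paper's one-line argument leaves implicit.
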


\begin{proof}
	Consider a joined tuple $t$ in either of the two sets.  The generic
	situation, as shown in Obs.~\ref{obs:ssn} has two cases.  While the first
	case makes $t$ a \kdom skyline, the UVP precludes the second case.  Thus,
	$t$ is always a \kdom skyline.
	\hfill{}
\end{proof}

Although we present Th.~\ref{thm:uvpssn} for the sake of completeness, we do
not assume it in our experiments (Sec.~\ref{sec:exp}).

\begin{table*}[t]
	\centering
	\begin{tabular}{|c|c|c|cccccc|c|c|}
		\hline
		fno & stop-over & cost & f1.dur & f1.rtg & f1.amn
		& f2.dur & f2.rtg & f2.amn & categorization & skyline \\
		\hline
		\hline
		(11,23) & C & 804 & 3.2 & 40 & 40 & 2.8 & 60 & 30 & $\sn_1 \Join \sn_2$ & yes \\
		(11,24) & C & 808 & 3.2 & 40 & 40 & 3.0 & 70 & 28 & $\sn_1 \Join \nn_2$ & no  \\
		(12,23) & C & 824 & 4.2 & 50 & 38 & 2.8 & 60 & 30 & $\nn_1 \Join \sn_2$ & no  \\
		(12,24) & C & 828 & 4.2 & 50 & 38 & 3.0 & 70 & 28 & $\nn_1 \Join \nn_2$ & no  \\
		(13,21) & D & 804 & 3.8 & 60 & 34 & 2.2 & 40 & 36 & $\sn_1 \Join \sn_2$ & yes \\
		(13,22) & D & 824 & 3.8 & 60 & 34 & 3.2 & 50 & 34 & $\sn_1 \Join \nn_2$ & no  \\
		(14,21) & D & 808 & 4.0 & 70 & 32 & 2.2 & 40 & 36 & $\nn_1 \Join \sn_2$ & no  \\
		(14,22) & D & 828 & 4.0 & 70 & 32 & 3.2 & 50 & 34 & $\nn_1 \Join \nn_2$ & no  \\
		(15,25) & E & 800 & 3.4 & 30 & 42 & 2.4 & 30 & 38 & $\sn_1 \Join \sn_2$ & yes \\
		(16,26) & F & 804 & 3.6 & 20 & 36 & 2.6 & 20 & 32 & $\ss_1 \Join \ss_2$ & yes \\
		(17,27) & G & 844 & 4.6 & 80 & 46 & 3.6 & 80 & 42 & $\sn_1 \Join \sn_2$ & no  \\
		(18,28) & H & 801 & 3.7 & 20 & 37 & 2.4 & 35 & 39 & $\ss_1 \Join \sn_2$ & no  \\
		(19,25) & E & 801 & 3.7 & 40 & 37 & 2.4 & 30 & 38 & $\nn_1 \Join \sn_2$ & no  \\
		\hline
	\end{tabular}
	\tabcaption{Joined relation ($f_1 \Join f_2$): aggregate.}
	\label{tab:joinagg}
\end{table*}

\subsection{Aggregate Attributes}
\label{sec:aggregate}

We next consider the case of aggregation where the \kdom skyline is sought over
attributes that attain values aggregated from attributes in the base relation.

We assume that there are $l_1$ local and $a$ aggregate skyline attributes in
$R_1$ and $l_2$ local and $a$ aggregate skyline attributes in $R_2$.  The total
number of skyline attributes in $R$ is, therefore, $l_1 + l_2 + a$ attributes.
As earlier, the final skyline query is asked over $k < l_1 + l_2 + a$
attributes.

The groups in the base relations are partitioned based on both the local and
aggregate attributes.  Out of the final number of skyline attributes $k$, $a$ of
them can be aggregate.  Thus, the minimum number of local attributes that must
be dominated in each base relation is $\konedd = k - a - l_2$ and $\ktwodd = k -
a - l_1$.  The categorization of the base relations into the sets \ss, \sn and
\nn are done on the basis of $\koned = \konedd + a$ and $\ktwod = \ktwodd + a$.
Since $d_1 = a + l_1$ and $d_2 = a + l_2$, these definitions are
same as earlier (Sec.~\ref{sec:joined}).

We use the following assumption about the \emph{monotonicity} property of
the aggregate attributes.

\begin{assmp}[Monotonicity]
	If the value of attribute $u_1$ dominates that of $u_2$ and the value of
	attribute $v_1$ dominates that of $v_2$, the aggregated value of $u_1
	\oplus v_1$ will dominate the aggregated value of $u_2 \oplus v_2$, where
	$\oplus$ denotes the aggregation operator.
\end{assmp}

Since the categorization remains the same, the fate of the joined tuples using
the aggregation remains exactly the same as earlier (as summarized in
Table~\ref{tab:subsets}).

Table~\ref{tab:joinagg} shows the joined relation obtained from
Table~\ref{tab:a} and Table~\ref{tab:b} with the cost values aggregated.

In the example, considering $k = 6$ with $a = 1$, $\konedd = 6 - 1 - 3 = 2$ and
$\ktwodd = 6 - 1 - 3 = 2$ but $\koned = \konedd + 1 = 3$ and $\ktwod = \ktwodd
+ 1 = 3$ remain as earlier.

\section{Algorithms}
\label{sec:algo}

In this section, we describe the various algorithms for answering the \kdom
skyline join queries (Sec.~\ref{sec:joined}).  We consider the general case
where the datasets do not follow UVP (Sec.~\ref{sec:uvp}) and, where in
addition to local skyline attributes, there are aggregate ones as well
(Sec.~\ref{sec:aggregate}).

\subsection{Na\"ive Algorithm}

The na\"ive algorithm (Algo.~\ref{alg:ksjqnaive}) simply computes the join of
the two relations first (line 1) and then computes the \kdom skylines from the
joined relation (line 2) using any of the standard \kdom skyline computation
methods \cite{skylinekdominant}.  Being the most basic, it suffers from two
major disadvantages.  First, the join can require a very large time, thereby
rendering the entire algorithm extremely time-consuming and impractical.  The
second shortcoming is the non-progressive result generation.  The user has to
wait a fairly large time (at least the complete joining time) before even the
first skyline result is presented to her.  In online scenarios, the progressive
result generation is quite an attractive and useful feature.

\begin{algorithm}[t]\footnotesize
	\caption{KSJQ: Na\"ive Algorithm}
	\label{alg:ksjqnaive}
	\begin{algorithmic}[1]
		\Require Relations $R_1, R_2$; Number of attributes $k$
		\Ensure \kdom skyline set $T$
		\State $D \gets R_1 \Join R_2$
		\State $T \gets$ \kdom skyline$(D, k)$
		\State \Return $T$
	\end{algorithmic}
\end{algorithm}

\subsection{Target Set}
\label{sec:targetset}

To alleviate the problems of the na\"ive algorithms, we next propose two
algorithms, \emph{grouping} and \emph{dominator-based}, that use the concepts
of optimization from Sec.~\ref{sec:opt}.

However, before we describe them, we first explain and define the concept of
\emph{target sets}.  Although a tuple in a set marked by ``may be'' or
``likely'' is not guaranteed to be a \kdom skyline, it needs to be checked
against only a small set of tuples, called its \emph{target set}.

Formally, a target set for a tuple $u'$ in a base relation is the set of tuples
$\ts(u')$ that can potentially combine with other tuples from the other base
relation and $k$-dominate a joined tuple formed with $u'$.  In other words, for
a joined tuple $t' = u' \Join v'$, there \emph{may} exist $v$ such that $t = u
\Join v$ where $u \in \ts(u')$ $k$-dominates $t'$.  No tuple outside the target
set $\ts(u)$ of $u$ may combine with any other tuple and dominate $t$.

\begin{defn}[Target Set]
	The target set for a tuple $u' \in R_i$ is the set of tuples $\ts(u')
	\subseteq R_i$ such that
	$\forall u \not\in \ts(u'), \ \not\exists v,v', \ t = u \Join v \succ t' =
	u' \Join v'$.
\end{defn}

The definition is one-sided: a tuple in the target set may or may not
join and dominate $t'$, but no tuple outside the target set can join and
dominate $t'$.  The utility of a target set is easy to understand.  To check
whether $t'$ is a \kdom skyline, $u'$ needs to be checked only against its
target set and nothing outside it.

The \emph{join} of target sets for the base tuples produces the potential
dominating set for the joined tuple.

The target set for a tuple $u' \in \ss$ constitutes itself and the set of
tuples $\{u\}$ that has \emph{at least} $\koned$ attributes same as $u'$.  The
augmentation is required to guarantee the correctness as explained in
Obs.~\ref{obs:ssn}.  The tuple $u'$ must be included in the target set of $u'$
for the same reason.

For each tuple in \ss, the number of such tuples sharing at least \koned
attributes is typically low.  Hence, maintenance of target sets is quite
feasible and practical.

However, the target set for a tuple in \sn can be any tuple \emph{outside} its
group that dominates it.  For simplicity, we consider it as the entire dataset
$R_i$.

Similarly, the target set for \nn is $R_i$.

\subsection{Grouping Algorithm}

Our first algorithm, called the \emph{grouping} algorithm
(Algo.~\ref{alg:ksjqgrouping}), first computes the groups \ss, \sn and \nn in
the base relations.  Next, the summarization from Table~\ref{tab:subsets} is
used.  Tuples from the sets marked by ``yes'' are immediately output as \kdom
skyline tuples.  Tuples marked by ``no'' are pruned and not even joined.

A tuple in a set marked by ``may be'' or ``likely'' is checked against the join
of the \emph{target} sets of the base tuples.

For a base tuple in the \ss group, the target set is first augmented with
tuples that share at least $k'_i$ attributes (the Augment subroutine in
lines~\ref{line:augmentone} and~\ref{line:augmenttwo}).  Then, each group of
tuples is checked only against its target set.  For example, in
line~\ref{line:targetone}, the set $\ss_1 \Join \sn_2$ is checked only against
$A_1 \Join R_2$ for a domination in $k$ attributes.  Note that the target set
for $u \in \ss_1$ is only $A_1$.  Similarly, lines~\ref{line:targettwo}
and~\ref{line:targetthree} handle the sets $\sn_1 \Join \ss_2$ and $\sn_2 \Join
\sn_1$ respectively.


The efficiency of the grouping algorithm stems from the fact that only the
tuples in $\sn_1 \Join \sn_2$ need to be compared against the entire $R_1 \Join
R_2$.  For other tuples, the decision can be taken without even joining (the
``yes'' and ``no'' cases), or the comparison set is small (for the tuples in
$\ss_1 \Join \sn_2$ and $\ss_2 \Join \sn_1$).

\begin{algorithm}[t]\footnotesize
	\caption{KSJQ: Grouping Algorithm}
	\label{alg:ksjqgrouping}
	\begin{algorithmic}[1]
		\Require Relations $R_1, R_2$; Number of attributes $k$
		\Ensure \kdom skyline set $T$
		\State $\koned \gets k - d_2$
		\State $\ktwod \gets k - d_1$
		\State $\ss_1, \sn_1, \nn_1 \gets$ Group$(R_1, \koned)$
		\State $\ss_2, \sn_2, \nn_2 \gets$ Group$(R_2, \ktwod)$
		\State $T \gets$ $(\ss_1 \Join \ss_2)$ \Comment {``yes'' tuples}
		\State $A_1 \gets$ Augment$(\ss_1, \koned)$
			\label{line:augmentone}
			\Comment {augment $u \in \ss_1$ with $\{u': u'_{\koned} = u_{\koned}\}$}
		\State $A_2 \gets$ Augment$(\ss_2, \ktwod)$
			\label{line:augmenttwo}
			\Comment {augment $v \in \ss_2$ with $\{v': v'_{\ktwod} = v_{\ktwod}\}$}
		\State $T_1 \gets$ CheckTarget$(\ss_1 \Join \sn_2, A_1 \Join R_2, k)$
			\label{line:targetone}
		\State $T_2 \gets$ CheckTarget$(\sn_1 \Join \ss_2, R_1 \Join A_2, k)$
			\label{line:targettwo}
		\State $T_3 \gets$ CheckTarget$(\sn_1 \Join \sn_2, R_1 \Join R_2, k)$
			\label{line:targetthree}
		\State \Return $T \gets T_1 \cup T_2 \cup T_3$
	\end{algorithmic}
\end{algorithm}

\subsection{Dominator-Based Algorithm}

The grouping algorithm has the problem that for tuples in $\sn_i$, the target
set is the entire relation $R_i$.  The next algorithm, \emph{dominator-based}
algorithm (Algo.~\ref{alg:ksjqdominator}) rectifies this by \emph{explicitly}
storing the set of dominators.  Thus, for a tuple $\in \ss_i, \sn_i$, the
dominator set of tuples is first obtained (lines~\ref{line:domone}
and~\ref{line:domtwo}).  When the tuple is in $\ss_i$, this set is empty.  The
dominator sets are augmented by the tuples themselves and those tuples having
the same values in the required number of skyline attributes
(lines~\ref{line:augone} and~\ref{line:augtwo}).  In general, the size of the
dominator sets is quite low as compared to the entire dataset, i.e., generally
$|dom(\cdot)| \ll R_i$.

The target sets for the joined tuples are composed of the joins of the
dominating tuples.  Each tuple in the sets marked by ``likely'' and ``may be''
is checked against these joins of the corresponding dominator sets and is added
to the answer only if no dominator exists (line~\ref{line:checkdom}).

The joins of dominating sets are substantially less in size than the target
sets for the grouping algorithm (which is the join of the entire target sets).
The saving is largest for tuples in $\sn_1 \Join \sn_2$.

The saving, however, comes at a cost.  For each tuple in \sn, \emph{all} the
dominators need to found out.  In addition, the entire dominator set needs to be
stored explicitly.

The advantages of the dominator-based algorithm may not be enough to offset
this overhead of time and storage, especially when there are many such tuples.
Sec.~\ref{sec:exp} compares the different algorithms empirically.

\begin{algorithm}[t]\footnotesize
	\caption{KSJQ: Dominator-Based Algorithm}
	\label{alg:ksjqdominator}
	\begin{algorithmic}[1]
		\Require Relations $R_1, R_2$; Number of attributes $k$
		\Ensure \kdom skyline set $T$
		\State $\koned \gets k - d_2$
		\State $\ktwod \gets k - d_1$
		\State $\ss_1, \sn_1, \nn_1 \gets$ Group$(R_1, \koned)$
		\State $\ss_2, \sn_2, \nn_2 \gets$ Group$(R_2, \ktwod)$
		\State $T \gets$ $(\ss_1 \Join \ss_2)$ \Comment {``yes'' tuples}
		\ForAll {$u \in \ss_1, \sn_1$}
			\State $dom(u) \gets$ $\koned$-dominators$(u)$ 
				\label{line:domone}
				\Comment {dominators of $u$ with $\koned$ attributes}
			\State $dom(u) \gets dom(u) \cup$ Augment$(u, \koned)$
				\label{line:augone}
				\Comment {$\{u': u'_{\koned} = u_{\koned}\}$}
		\EndFor
		\ForAll {$v \in \ss_2, \sn_2$}
			\State $dom(v) \gets$ $\ktwod$-dominators$(v)$ 
				\label{line:domtwo}
				\Comment {dominators of $v$ with $\ktwod$ attributes}
			\State $dom(v) \gets dom(v) \cup$ Augment$(v, \ktwod)$
				\label{line:augtwo}
				\Comment {$\{v': v'_{\ktwod} = v_{\ktwod}\}$}
		\EndFor
		\State $T \gets \varnothing$
		\ForAll {$u \Join v \in (\ss_1 \Join \sn_2) \cup (\sn_1 \Join \ss_2) \cup (\sn_1 \Join \sn_2)$}
			\State $T \gets T \cup$ CheckDominators($dom(u) \Join dom(v), k$)
				\label{line:checkdom}
		\EndFor
		\State \Return $T$
	\end{algorithmic}
\end{algorithm}

\subsection{Cartesian Product}
\label{sec:cartesian}

When the final relation is a \emph{Cartesian product} of the two base relations,
the algorithms become considerably easier.  The Cartesian product can be
considered as a special case of join with every tuple having the \emph{same}
value of the join attribute.  In other words, all the tuples are in the same
join group.  As a result, there is no \sn set.  A tuple is either in \ss (when
it is a skyline in its local relation) or in \nn (when it is not a skyline).
Consequently, the tables become much simpler, and the fate of all the joined
(i.e., final) tuples can be concluded without the need to explicitly compute
them.  The tuples in $\ss_1 \Join \ss_2$ are skylines while none of the other
tuples are.

\subsection{Non-Equality Join Condition}
\label{sec:nonequality}

In certain cases, the join condition may \emph{not} be an equality.  For
example, in a flight combination, the arrival time of the first leg needs to be
earlier than the departure time of the second, i.e., $f_1.\text{arrival} <
f_2.\text{departure}$.  In this section, we discuss how to handle such join
cases when the condition is one of $<, \leq, >, \geq$.

Since the main purpose of dividing a base relation into the three sets \ss, \sn
and \nn is to ensure that certain decisions can be taken about the tuples in
these sets \emph{without} joining, all the optimizations discussed in
Sec.~\ref{sec:opt} work with the following modifications.

A tuple in $\ss_1 \Join \ss_2$ can never be $k$-dominated and, thus, it does
not matter how such a tuple is composed from the base relations.  In other
words, the semantics of the join condition, equality or otherwise, does not
matter for $\ss_1 \Join \ss_2$ tuples.

Next consider a tuple $t' = u' \Join v' \in (\sn_1 \Join \ss_2)$.  A tuple in
the \sn set, $u'$, is originally defined as one that is not dominated by any
other tuple in the \emph{same} group.  This ensures that if $u'$ joins with
$v'$ from the other relation, then \emph{no} other tuple $u$ can join with $v'$
to dominate $t'$.  This is the crucial property that needs to be maintained
even when the join condition is non-equality.

Thus, if the join condition is $u'.arr < v'.dep$, then the set $\{u: u.arr <
u'.arr\}$ is considered to be in the \emph{same} group of $u'$ since it can
also join with $v'$ (and can potentially dominate $u$).  In other words, this
ensures that all such $u$ is join compatible with $v'$.  The set \sn is thus
expanded to take care of the non-equality condition.  Note that there may exist
other tuples $\{u'': u''.arr < v'.dep\}$ that may also join with $v'$; these,
however, cannot be determined locally without the knowledge of $v'$ from the
other relation and, therefore, cannot be considered.  The target set of a tuple
in \sn is anyway the entire dataset (or its dominators).  Thus, the algorithms
will work correctly with the above modification.

Similarly, for the converse set, $\ss_1 \Join \sn_2$, the \sn set for $v'$
consists of $\{v: v.dep > v'.dep\}$.

A joined tuple with \nn as a component is rejected as a \kdom
skyline since the \nn tuple can be always dominated by another tuple in the
\emph{same} group.  Hence, similar to \sn, the group of a tuple needs to be
defined by taking into account the semantics of the join condition.

Considering the earlier example of $u'.arr < v'.dep$, if $u' \in \nn$, then to
say that $u' \in \nn_1$, there must exist $u: u.arr < u'.arr$ and $u
\succ_{\koned} u'$.  (The definition is suitably modified for $\nn_2$ in a
suitable manner.) This ensures that the joined tuple $t' = u' \Join v'$ will be
dominated by $u \Join v'$.  Once more, tuples of the form $\{u'': u''.arr <
v'.dep\}$ are left out due to lack of knowledge about $v'$.

It may happen that there does not exist any such $u$ but there exists such an
$u''$.  In that case, $u'$ is classified as an \sn tuple instead of an \nn.
This, however, only leads to extra processing of tuples joined with $u'$.  The
\emph{correctness} is not violated as such joined tuples of the form $t' = u'
\Join v'$ will be finally caught by $u'' \Join v'$ and rejected.  Thus, the
above modification only affects the efficiency of the algorithms, not the final
result.

\comment{

We need an experimental evaluation here.

}

\subsection{Algorithms for Aggregation}
\label{sec:algoagg}

The algorithms that consider aggregation
are essentially the same as the plain KSJQ.
The only difference is that when the join is performed, the aggregation of the
attributes are done as an additional step.  Therefore, they are not discussed
separately.

We next discuss the algorithms for finding $k$
(Problem~\ref{prob:k}).

\subsection{Finding k: Na\"ive Algorithm}

The na\"ive algorithm (Algo.~\ref{alg:knaive}) to search for the lowest $k$ that
produces at least $\delta$ skylines starts from the least possible value of $k$,
i.e., $\max\{d_1,d_2\} + 1$, and keeps incrementing it till the number of \kdom
skylines is at least $\delta$.  The largest possible value of $k$, i.e., $d$, is
otherwise returned by default, even if it does not satisfy the $\delta$
criterion.

\begin{algorithm}[t]\footnotesize
	\caption{Finding $k$: Na\"ive Algorithm}
	\label{alg:knaive}
	\begin{algorithmic}[1]
		\Require Number of skylines $\delta$
		\Ensure Number of attributes $k$
		\State $k \gets \max\{d_1,d_2\} + 1$ \Comment {minimum $k$}
		\While {$k < d$}
			\If {$|skyline(k)| \geq \delta$} \Comment {actual number}
				\State \Return $k$ \Comment {return and terminate}
			\EndIf
			\State $k \gets k + 1$
		\EndWhile
		\State \Return $d$ \Comment {maximum possible $k$}
	\end{algorithmic}
\end{algorithm}

The algorithm is very inefficient as for each case, it computes the actual
\kdom skyline set.  Further, it traverses the possibilities of $k$ in a linear
manner.  The correctness is based on the fact that the number of \kdom skylines
is a monotonically non-decreasing function in $k$ (Lemma~\ref{lem:monotonic}).

We next design algorithms that use Table~\ref{tab:subsets}.

\subsection{Finding k: Range-Based Algorithm}

For a particular value of $k$, the actual number of \kdom skylines, $\Delta_k$,
is \emph{at least} the size of the ``yes'' sets, denoted by $\Delta_{k,lb}$,
and \emph{at most} the sum of sizes of the ``yes'', ``likely'' and ``may be''
sets, denoted by $\Delta_{k,ub}$.  These, thus, denote the lower and upper
bounds respectively: $\Delta_{k,lb} < \Delta_k < \Delta_{k,ub}$.

The \emph{range-based} algorithm (Algo.~\ref{alg:krange}) uses these bounds to
speed-up the process.  Starting from the minimum possible $k = \max\{d_1,d_2\}
+ 1$, the algorithm finds $\Delta_{k,lb}$ and $\Delta_{k,ub}$
(lines~\ref{line:delone} and~\ref{line:deltwo} respectively).  If
$\Delta_{k,lb} \geq \delta$, then the current $k$ is the answer
(line~\ref{line:answerone}).  If $\Delta_{k,ub} < \delta$, then the current
$k$ cannot be the answer and $k$ is incremented (line~\ref{line:kplusone}).
Otherwise, i.e., if $\Delta_{k,lb} < \delta \leq \Delta_{k,ub}$, then the
current $k$ may be an answer.  In this case, the \emph{actual} \kdom skyline
set is computed.  If its size is $\delta$ or greater, it is returned as the
answer (line~\ref{line:answertwo}).  Else, $k$ is incremented
(line~\ref{line:kplustwo}), and the steps are repeated.


\begin{algorithm}[t]\footnotesize
	\caption{Finding $k$: Range-Based Algorithm}
	\label{alg:krange}
	\begin{algorithmic}[1]
		\Require Number of skylines $\delta$
		\Ensure Number of attributes $k$
		\State $k \gets \max\{d_1,d_2\} + 1$ \Comment {minimum $k$}
		\While {$k < d$}
			\State $\Delta_{k,lb} \gets |\textrm{``yes'' sets}|$
				\label{line:delone}
			\State $\Delta_{k,ub} \gets |\textrm{``yes'' sets}| + |\textrm{``likely'' sets}| + |\textrm{``may be'' sets}|$
				\label{line:deltwo}
			\If {$\Delta_{k,lb} \geq \delta$} \Comment {lower bound}
				\label{line:answerone}
				\State \Return $k$
			\ElsIf {$\Delta_{k,ub} < \delta$} \Comment {upper bound}
				\label{line:kplusone}
				\State $k \gets k + 1$
			\ElsIf {$|skyline(k)| \geq \delta$} \Comment {actual number}
				\label{line:answertwo}
				\State \Return $k$
			\Else
				\label{line:kplustwo}
				\State $k \gets k + 1$
			\EndIf
		\EndWhile
		\State \Return $d$ \Comment {maximum possible $k$}
			\label{line:return}
	\end{algorithmic}
\end{algorithm}

While this algorithm is definitely more efficient than the na\"ive one, it
still suffers from the shortcoming that it examines a large number of $k$'s by
incrementing it one by one.  If the required $k$ lies towards the higher end of
the range, it unnecessarily examines too many lower values of $k$.  The next
algorithm does a \emph{binary search} to reduce this overhead.

\subsection{Finding k: Binary Search Algorithm}

Algo.~\ref{alg:kbinary} shows how the binary search proceeds.  It starts from
the middle of the possible values of $k$ (line~\ref{line:middle}).  The values
of $\Delta_{k,lb}$ and $\Delta_{k,ub}$ are computed in the same manner as
earlier (lines~\ref{line:bdelone} and~\ref{line:bdeltwo} respectively).

If $\Delta_{k,lb} \geq \delta$ (line~\ref{line:onegeq}), then $k$ is a
potential answer.  No value in the higher range can be the answer as the
current $k$ already satisfies the condition.  However, there may be a lower $k$
that satisfies the $\delta$ condition.  Hence, the search is continued in the
lower range to try and find a better (i.e., lesser) $k$.

If, on the other hand, $\Delta_{k,ub} < \delta$ (line~\ref{line:twoless}), then
the current estimate of $k$ is too low.  The search is, therefore, continued in
the higher range.

If none of these bounds help, the actual number of \kdom skylines, $\Delta_k$,
is found.  If $\Delta_k \geq \delta$ (line~\ref{line:skygeq}), then the current
$k$ is a potential answer.  However, a lesser $k$ may be found and, so, the
search proceeds to the lower range.

Otherwise, i.e., when $\Delta_k < \delta$ (line~\ref{line:skyless}), the
required $k$ is searched in the higher range.

The algorithm stops when the lower range of the search becomes larger than or
equal to the current answer (line~\ref{line:banswer}), which then is the lowest
$k$ that satisfies the $\delta$ condition.  It may also stop when the range is
exhausted, in which case, the current value of $k$ is returned
(line~\ref{line:breturn}).

\begin{algorithm}[t]\footnotesize
	\caption{Finding $k$: Binary Search Algorithm}
	\label{alg:kbinary}
	\begin{algorithmic}[1]
		\Require Number of skylines $\delta$
		\Ensure Number of attributes $k$
		\State $l \gets \max\{d_1,d_2\} + 1$ \Comment {minimum $k$}
		\State $h \gets d$ \Comment {maximum $k$}
		\State $cur \gets d$ \Comment {current estimate of $k$}
		\While {$l < h$}
			\State $k \gets \lfloor (l+h) / 2 \rfloor$
				\label{line:middle}
			\State $\Delta_{k,lb} \gets |\textrm{``yes'' sets}|$
				\label{line:bdelone}
			\State $\Delta_{k,ub} \gets |\textrm{``yes'' sets}| + |\textrm{``likely'' sets}| + |\textrm{``may be'' sets}|$
				\label{line:bdeltwo}
			\If {$\Delta_{k,lb} \geq \delta$}
				\label{line:onegeq}
				\State $cur \gets k$ \Comment {update current estimate}
				\State $h \gets k - 1$ \Comment {search for a lower $k$}
			\ElsIf {$\Delta_{k,ub} < \delta$}
				\label{line:twoless}
				\State $l \gets k + 1$ \Comment {search for a higher $k$}
			\ElsIf {$|skyline(k)| \geq \delta$}
				\label{line:skygeq}
				\State $cur \gets k$ \Comment {update current estimate}
				\State $h \gets k - 1$ \Comment {search for a lower $k$}
			\ElsIf {$|skyline(k)| < \delta$}
				\label{line:skyless}
				\State $l \gets k + 1$ \Comment {search for a higher $k$}
			\EndIf
			\If {$l \geq cur$} \Comment {lowest $k$ already found}
				\label{line:banswer}
				\State \Return $cur$
			\EndIf
		\EndWhile
		\State \Return $cur$
			\label{line:breturn}
	\end{algorithmic}
\end{algorithm}

The binary search based algorithm, thus, speeds up the searching through the
possible range of values of $k$.  Sec.~\ref{sec:exp} compares the three
algorithms empirically.

\section{Experimental Results}
\label{sec:exp}

We experimented with data synthetically generated using
\url{http://randdataset.projects.pgfoundry.org/} on an Intel i7-4770 @3.40\,GHz
Octacore machine with 16\,GB RAM using code written in Java.

We also experimented with a real dataset of two-legged flights from New Delhi
to Mumbai.  The details are in Sec.~\ref{sec:real}.

We measured the effects of various parameters on the different algorithms
proposed in Sec.~\ref{sec:algo}.  The parameters and their default values are
listed in Table~\ref{tab:param}.  Note that the size of the joined relation is a
derived parameter.  It is equal to $n^2/g$ for two base relations with $n$
tuples and $g$ groups.  When the effect of a particular set of parameters are
measured, the rest are held to their default values, unless explicitly stated
otherwise.

\begin{table}[t]
	\centering
	\begin{tabular}{|c|c|c|}
		\hline
		Symbol & Parameter & Default value \\
		\hline
		\hline
		$n$ & Dataset size for base relation & $3,300$ \\
		$d$ & Dimensionality of base relation & $7$ \\
		$k$ & Number of skyline attributes & $11$ \\
		$a$ & Number of aggregate attributes & $2$ \\
		$g$ & Number of join groups & $10$ \\
		$T$ & Dataset type & Independent \\
		$\delta$ & Threshold of skyline size & $10,000$ \\
		\hline
		$N$ & Size of joined relation & $1,089,000$ \\
		\hline
	\end{tabular}
	\tabcaption{Parameters for experiments.}
	\label{tab:param}
\end{table}

In the figures, the three main algorithms for KSJQ are denoted as: $G$ for
grouping, $D$ for dominator-based, and $N$ for na\"ive.  The overall running
time for each algorithm is divided into various components: (i)~time taken for
computing the groups in the base relations, i.e., \ss, \sn, and \nn, (ii)~time
taken for actually joining the tuples from the two base relations that cannot be
pruned, (iii)~time taken for finding the dominators of the tuples, and (iv)~the
rest of the processing.  These are marked separately in the figures.

Not all the components are applicable to every algorithm, e.g., the na\"ive
algorithm does not find groups.  The components that are not applicable to an
algorithm are shown to have zero costs.

The three algorithms for determining the value of $k$ are depicted as: $B$ for
binary search, $R$ for range-based, and $N$ for na\"ive.

\subsection{Aggregate}

We first show the results where aggregate values have been used.  The
aggregation function used is \emph{sum}.

\comment{

\begin{verbatim}
A
D6_A1
D7_A2
Dimension_2
Group_Size
N
Data_Type
\end{verbatim}

}

\subsubsection{Effect of Dimensionality}

The first experiment measures the effect of varying $k$.  Fig.~\ref{fig:sumk}
shows that the running time increases sharply with $k$.  The two different
settings of dimensionality, $d$, and number of aggregate attributes, $a$,
(Fig.~\ref{subfig:sumk7} and Fig.~\ref{subfig:sumk6}) show the robustness of
this behavior.  As $k$ increases, it becomes increasingly hard to dominate a
tuple in $k$ dimensions.  As a result, the number of \kdom skyline increases
heavily, thereby resulting in increased running times.

Overall, the grouping algorithm is the fastest.  The dominator-based algorithm
spends a substantial amount of time in finding the dominators for each tuple.
This overhead is not compensated enough in the final checking stage.  This is
due to the fact that the average dominator set sizes are quite large and,
hence, joining large dominator sets requires a substantial amount of time.
With increasing dimensionality, the time required to find the dominator sets
increases as well.

As expected, the na\"ive algorithm performs the worst as it does not attempt
any optimization at all.  It is slower than the grouping algorithm by about
1.5-2 times.

\begin{figure}[t]
	\centering
	\subfloat[$d = 7, a = 2$]
	{
		\includegraphics[width=\subfigwidth]{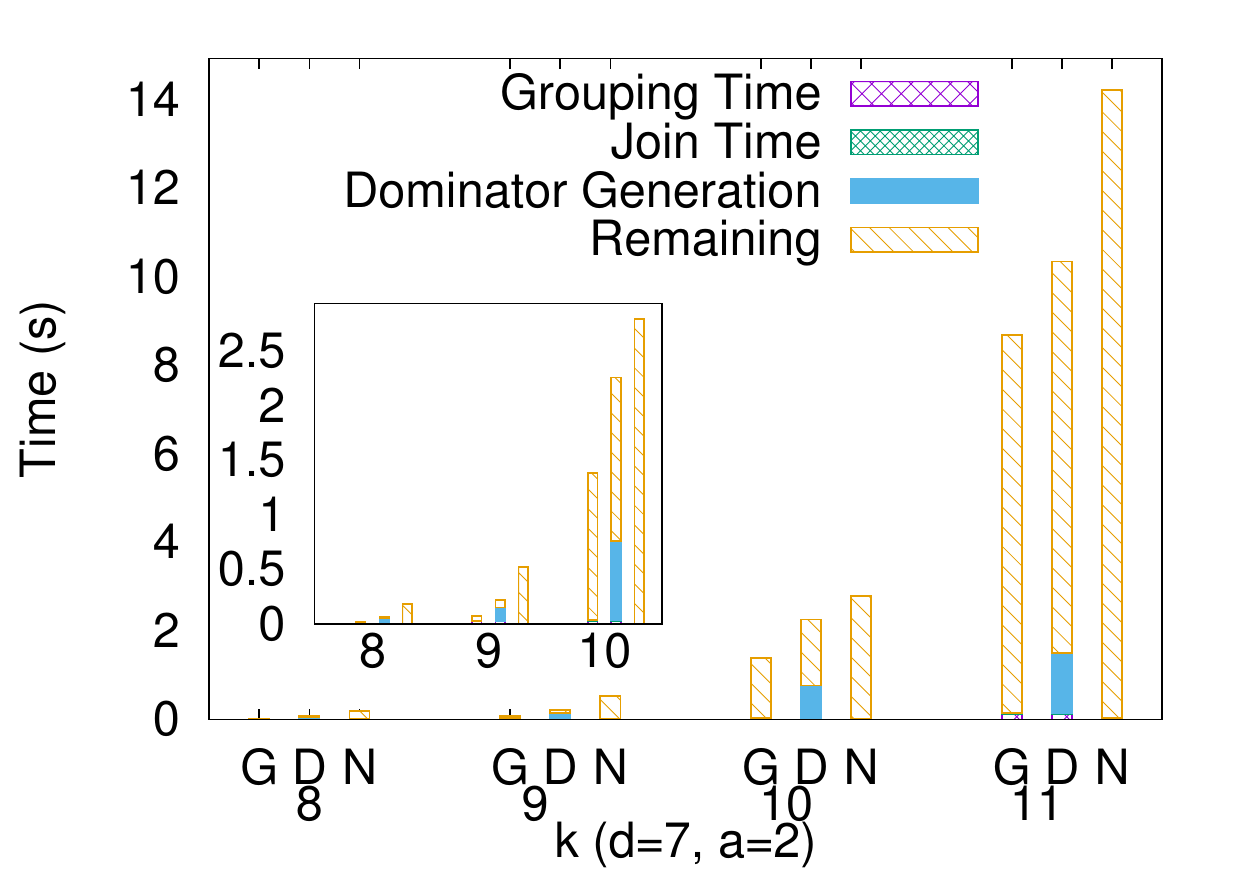}
		\label{subfig:sumk7}
	}
	\subfloat[$d = 6, a = 1$]
	{
		\includegraphics[width=\subfigwidth]{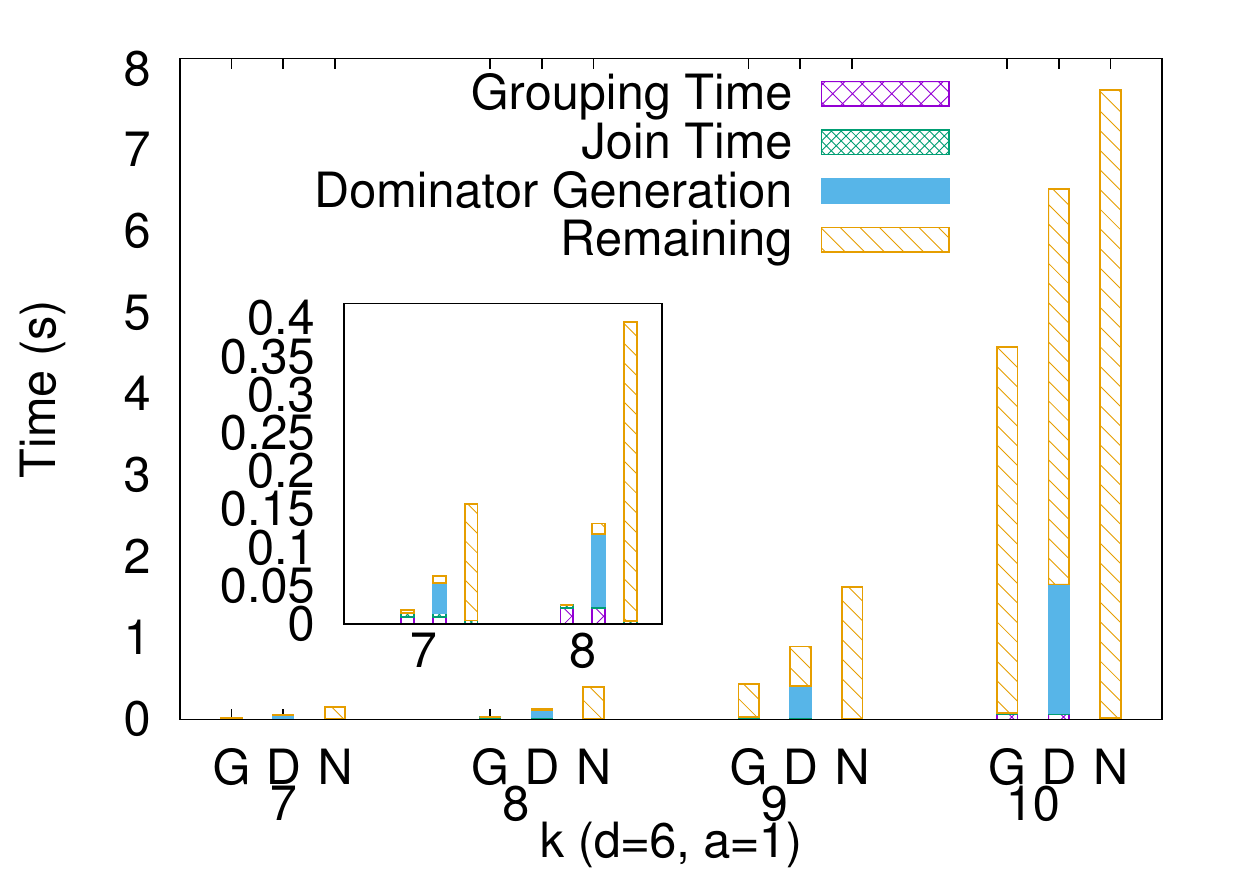}
		\label{subfig:sumk6}
	}
	\figcaption{Effect of $k$.}
	\label{fig:sumk}
\end{figure}

The next set of experiments hold $d$ and $k$ constant but increases the number
of aggregate attributes, $a$.  Fig.~\ref{subfig:suma} depict the results.  Note
that $a = 0$ signifies that no aggregate attributes are used.  The trend of the
results remain the same with the running time increasing with $a$.  Once more,
grouping is the best algorithm followed by dominator-based and na\"ive.

Fig.~\ref{subfig:sumd} shows a medley of results across different $d$, $k$ and
$a$.  When $a$ or $k$ increase, the running time increases as well.

However, it seems that the reverse happens with $d$.  Comparing the case of $d
= 5, k = 7, a = 1$ with $d = 6, k = 7, a = 1$, we note that in the first case,
$\koned = \ktwod = 3$ while in the second case, $\koned = \ktwod = 2$.  Thus,
in the second case, it is easier to find the groups and perform the joins.
Consequently, it runs faster.  The same reasoning holds true for $d = 5, k = 7,
a = 2$ against $d = 6, k = 7, a = 2$.

We next compare $d = 5, k = 7, a = 2$ against $d = 6, k = 8, a = 2$.  The
values of $\koned = \ktwod = 4$ are same in both the cases.  However, the size
of dominator sets is larger in the first case.  The time required to divide the
base relations into the three sets is also higher.  This leads to an overall
higher running time.

\comment{

The final skyline is asked for $k = 7$ attributes out of the total
dimensionality of $8$ attributes in the first case, while in the second case,
it is for $k = 8$ attributes out of a total of $10$ attributes.  Since the
number of tuples is the same in both the cases, the number of skylines in the
first case is higher.

}

\begin{figure}[t]
	\centering
	\subfloat[Effect of $a$.]
	{
		\includegraphics[width=\subfigwidth]{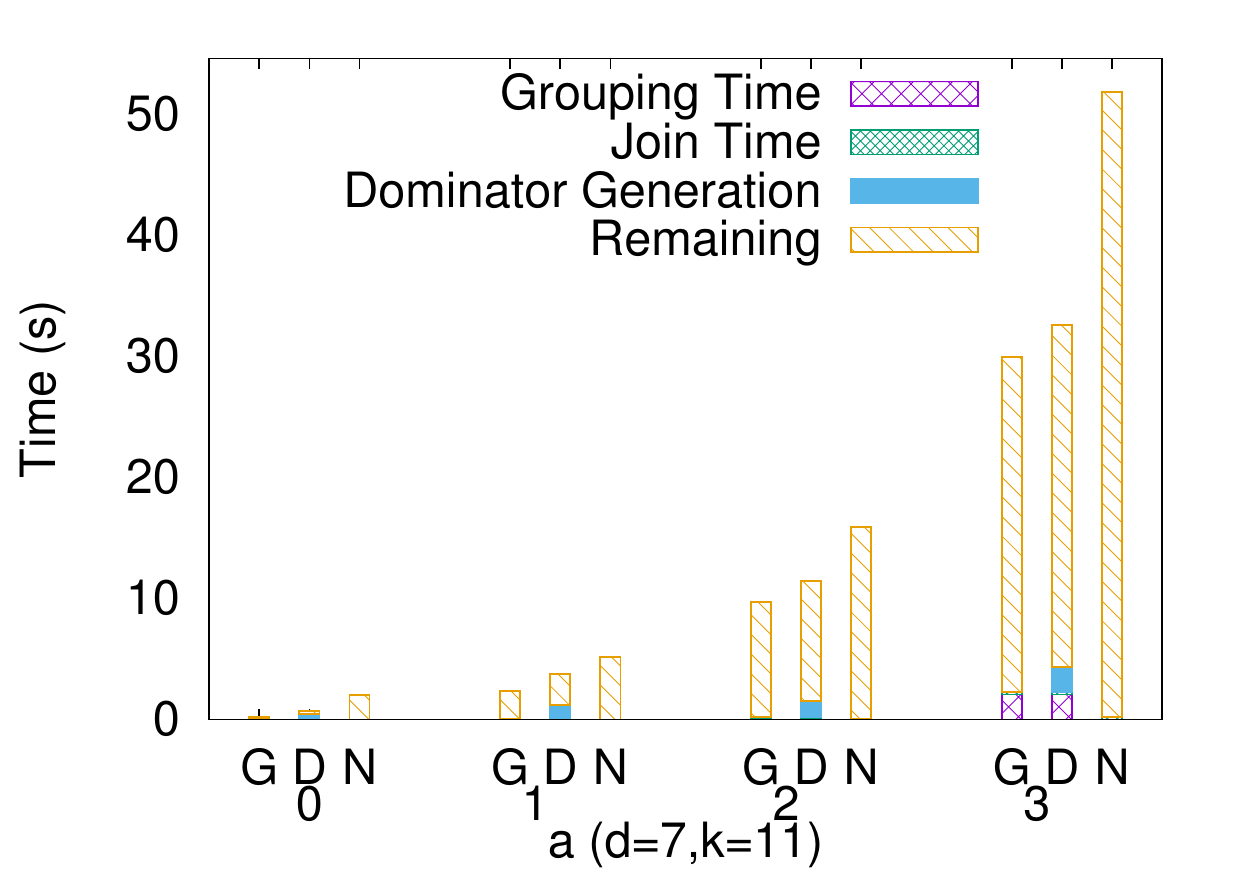}
		\label{subfig:suma}
	}
	\subfloat[Dimensionality.]
	{
		\includegraphics[width=\subfigwidth]{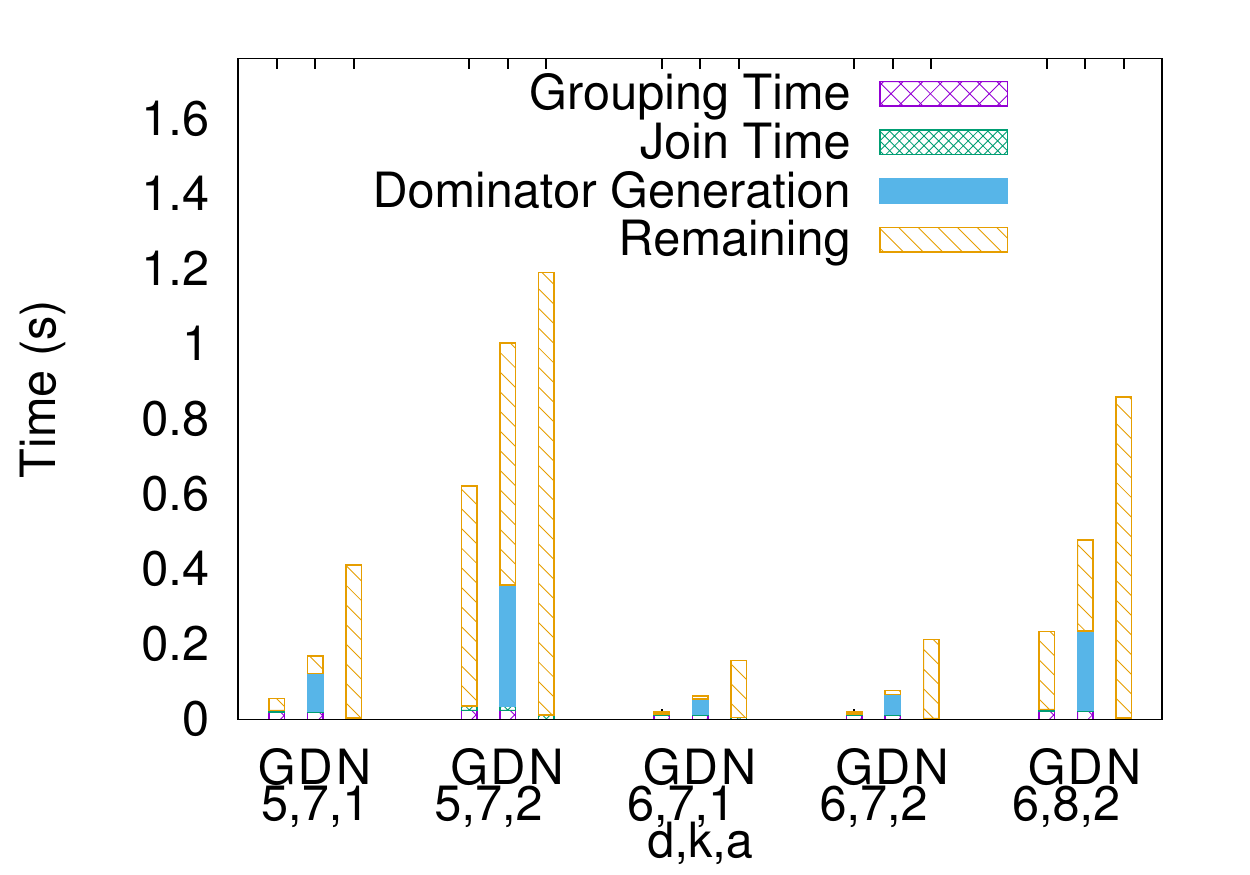}
		\label{subfig:sumd}
	}
	\figcaption{Effect of dimensionality.}
	\label{fig:sumd}
\end{figure}

\begin{figure}[t]
	\centering
	\subfloat[Effect of $g$.]
	{
		\includegraphics[width=\subfigwidth]{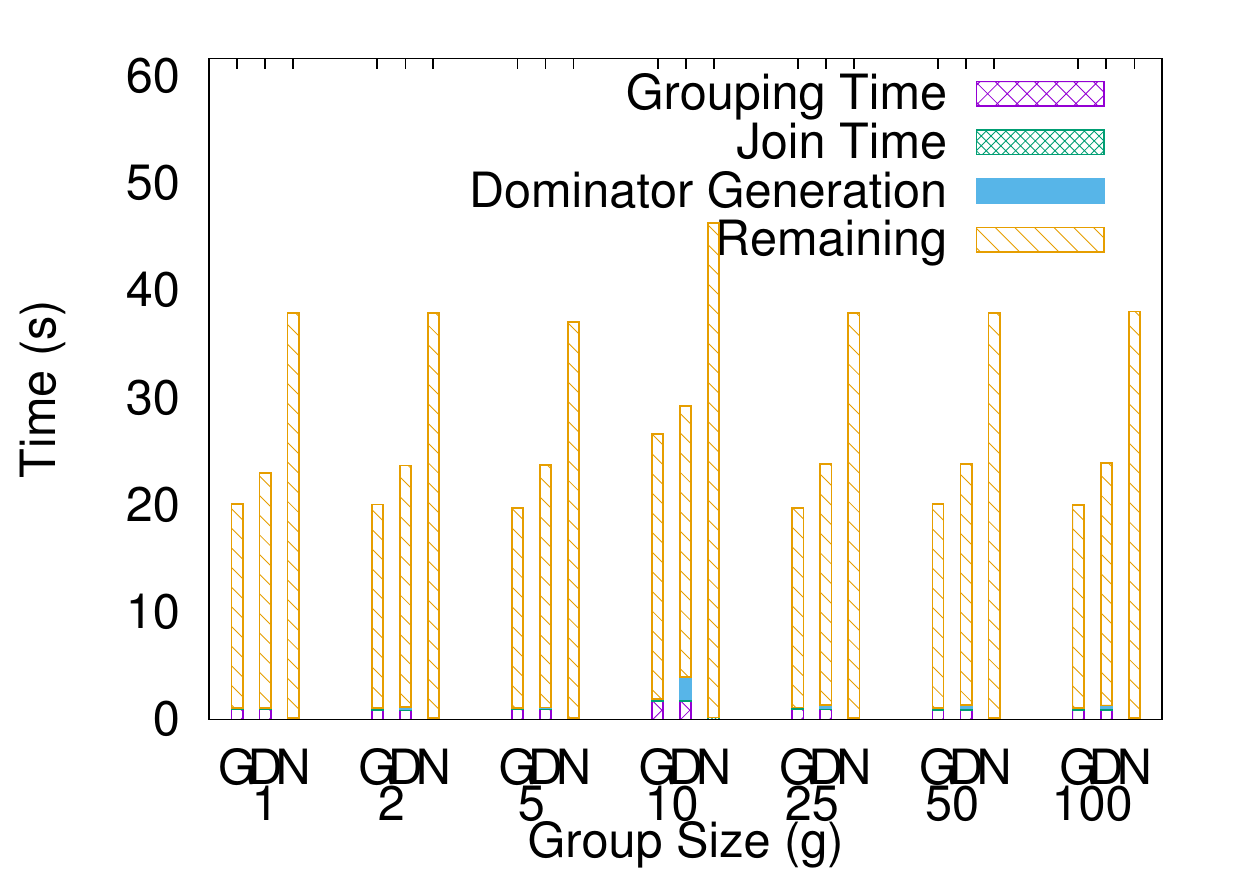}
		\label{subfig:sumg}
	}
	\subfloat[Effect of $n$.]
	{
		\includegraphics[width=\subfigwidth]{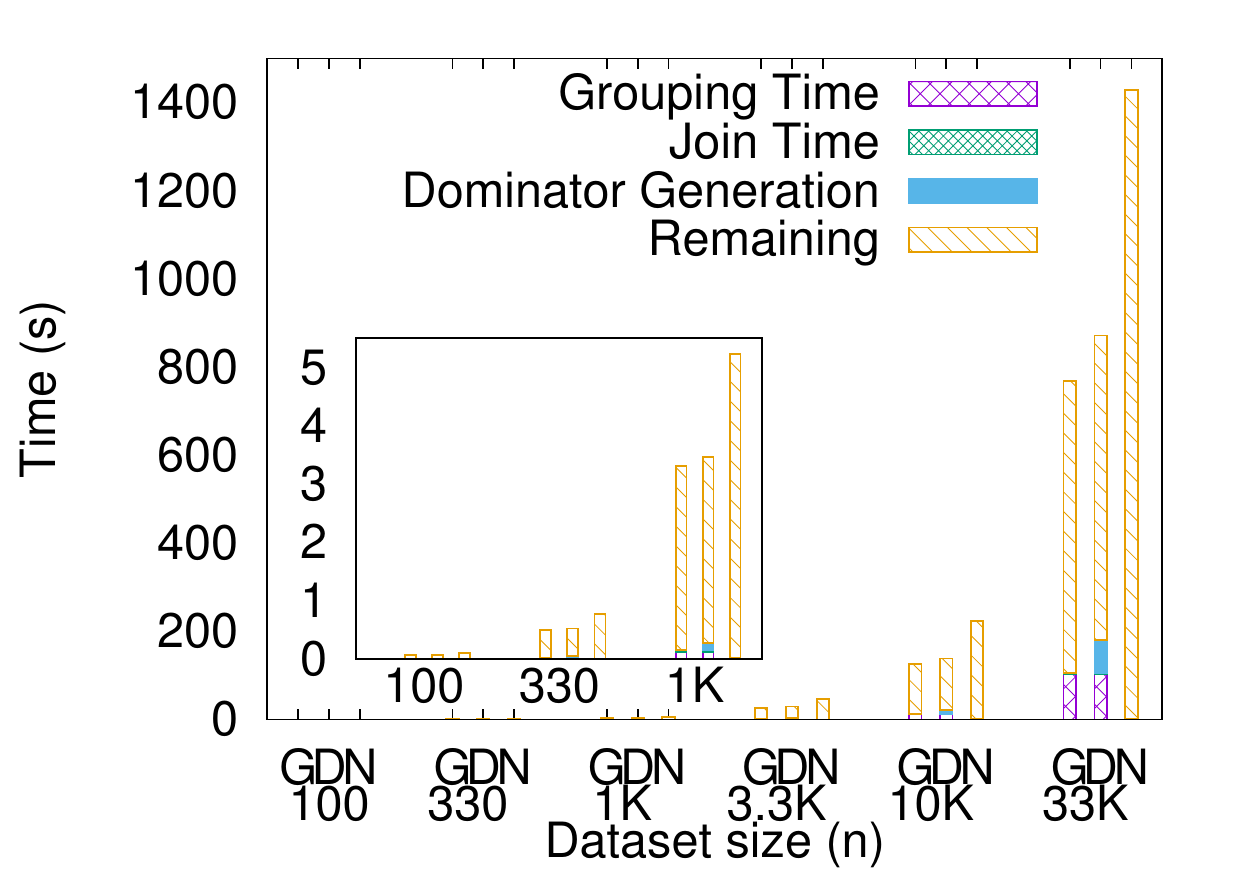}
		\label{subfig:sumn}
	}
	\figcaption{Scalability.}
	\label{fig:sums}
\end{figure}

\subsubsection{Effect of Number of Join Groups}
\label{sec:g}

Fig.~\ref{subfig:sumg} shows the effect of number of join groups.  Note that
when $g=1$, the join reduces to a Cartesian product.  This case can be handled
specially as explained in Sec.~\ref{sec:cartesian}.  When $g$ is low, the
number of \kdom skylines is low since there are more chances of a tuple being
$k$-dominated by another one in the same group.  Thus, there are less number of
$\sn$ tuples (none at $g=1$) and their dominators.  On the other hand, when $g$
is high, the size of the joined relation (which is $n^2/g$) decreases.
Therefore, there are two opposing effects on the running time.  Empirically,
the running times are the highest at medium values.

\subsubsection{Effect of Dataset Size}
\label{sec:n}

The next experiment varies the size of the base relations, $n$.  Note that with
increase in $n$, the size of the joined relation increases quadratically
($O(n^2)$).  Consequently, as visible in Fig.~\ref{subfig:sumn}, the running
time increases drastically.  The scalability of the grouping algorithm, in
particular, as well as the dominator-based algorithm, is sub-linear in the size
of the joined relation, though.

\subsubsection{Effect of Type of Data Distribution}
\label{sec:t}

\begin{figure}[t]
	\centering
	\subfloat[Data type.]
	{
		\includegraphics[width=\subfigwidth]{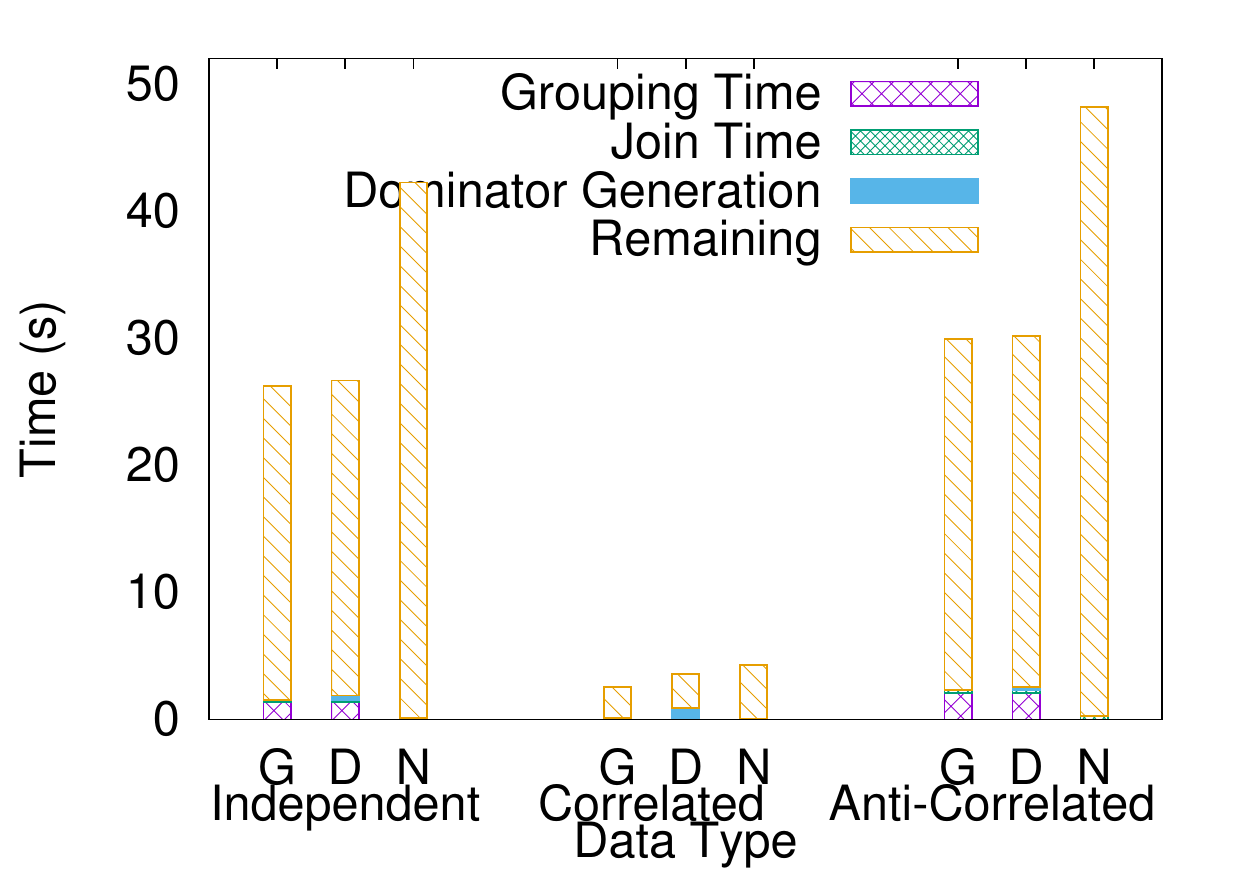}
		\label{subfig:sumt}
	}
	\figcaption{Type of data distribution.}
	\label{fig:sumt}
\end{figure}

The final set of experiments on the aggregated attributes measures the effect
of type of data distribution.  Fig.~\ref{fig:sumt} shows that correlated
datasets are the easiest to process due to higher chances of domination of a
tuple by another tuple, thereby resulting in lesser number of skylines.  The
anti-correlated datasets are the most time consuming due to the opposing
effect.  The independent datasets are mid-way.

\comment{

\subsection{Aggregate with Minimum}

\begin{verbatim}
A
D6_A1
D7_A2
Dimension_2
Group_Size
N
Data_Type
\end{verbatim}

\begin{figure}[t]
	\centering
	\subfloat[Effect of $a$.]
	{
		\includegraphics[width=\subfigwidth]{Selected_Min_Agg/A}
		\label{subfig:mina}
	}
	\subfloat[Effect of $k$.]
	{
		\includegraphics[width=\subfigwidth]{Selected_Min_Agg/D6_A1}
		\label{subfig:mink}
	}
	\figcaption{Min: Effect of dimensionality.}
	\label{fig:mind}
\end{figure}

\begin{figure}[t]
	\centering
	\subfloat[Effect of $a$.]
	{
		\includegraphics[width=\subfigwidth]{Selected_Min_Agg/Dimension_2}
		\label{subfig:mind}
	}
	\subfloat[Effect of $k$.]
	{
		\includegraphics[width=\subfigwidth]{Selected_Min_Agg/Group_Size}
		\label{subfig:ming}
	}
	\figcaption{Min: Effect of dimensionality.}
	\label{fig:mind}
\end{figure}

\begin{figure}[t]
	\centering
	\subfloat[Effect of $a$.]
	{
		\includegraphics[width=\subfigwidth]{Selected_Min_Agg/N}
		\label{subfig:minn}
	}
	\subfloat[Effect of $k$.]
	{
		\includegraphics[width=\subfigwidth]{Selected_Min_Agg/Data_Type}
		\label{subfig:mint}
	}
	\figcaption{Min: Effect of dimensionality.}
	\label{fig:mind}
\end{figure}

\subsubsection{Effect of Dimensionality}

This includes $d$, $k$ and $a$.

\subsubsection{Effect of Dataset Size}

$n$.

\subsubsection{Effect of Data Type}

\subsubsection{Effect of Number of Groups}

$g$.

}

\subsection{No Aggregation}

The next set of experiments target the scenarios where no aggregation over the
skyline attributes is done.

\comment{

\begin{verbatim}
D5
Dimension_2
Group_Size
N
Data_Type
\end{verbatim}

}

\subsubsection{Effect of Dimensionality}

\begin{figure}[t]
	\centering
	\subfloat[Effect of $k$.]
	{
		\includegraphics[width=\subfigwidth]{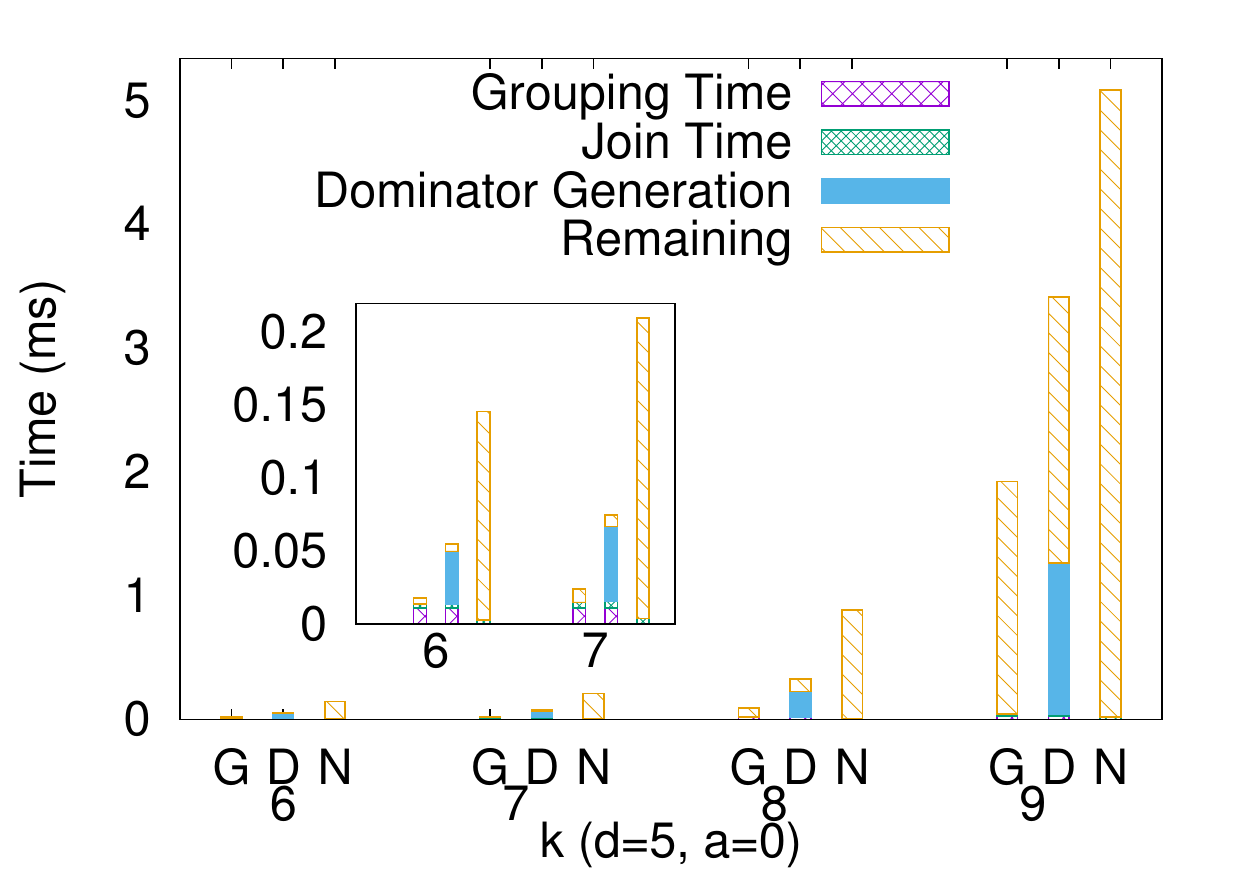}
		\label{subfig:nok}
	}
	\subfloat[Effect of $d$.]
	{
		\includegraphics[width=\subfigwidth]{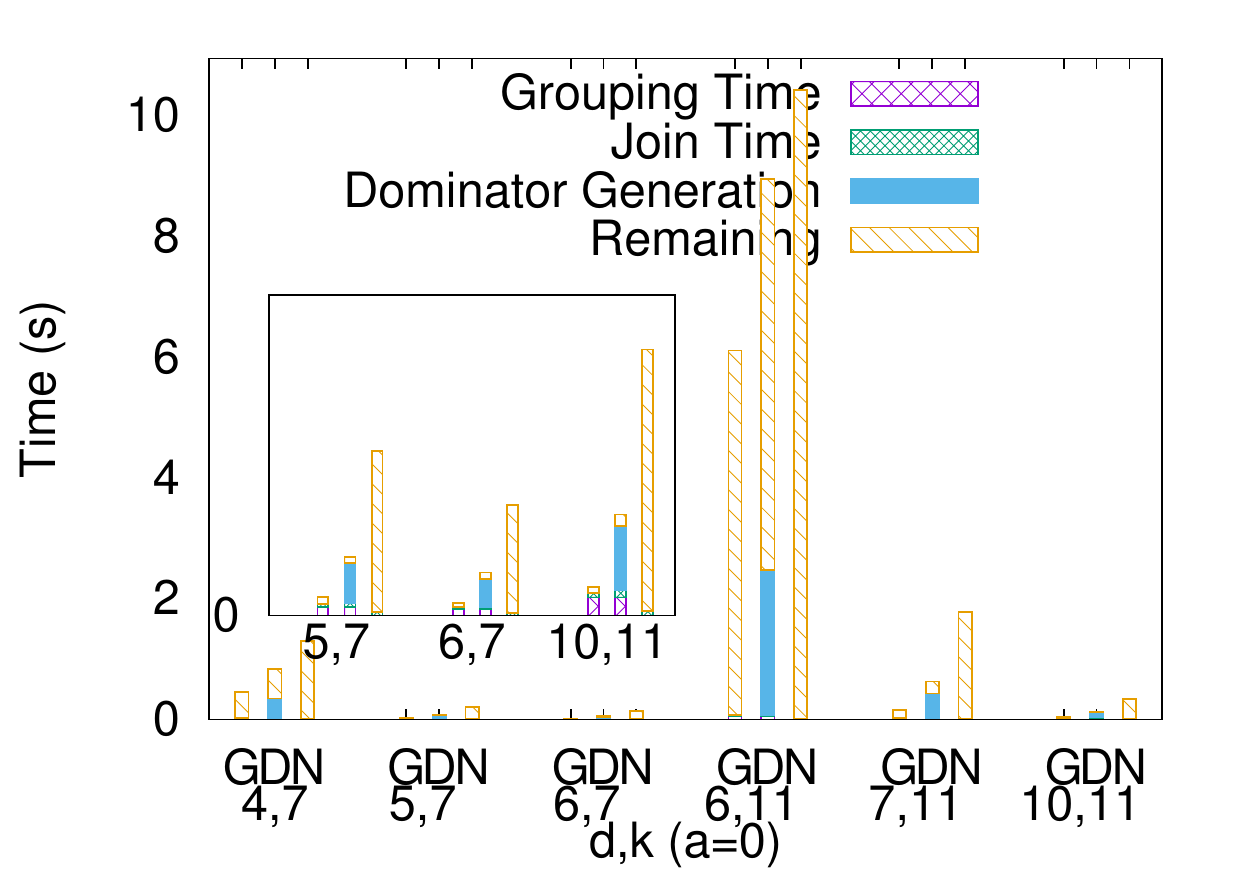}
		\label{subfig:nod}
	}
	\figcaption{Effect of dimensionality (no aggregation).}
	\label{fig:nodimen}
\end{figure}

Fig.~\ref{subfig:nok} shows the effect of $k$ when $d = 5$.  Note that since $a
= 0$, the possible values of $k$ range from $d+1 = 6$ to $2d - 1 = 9$.

Similar to the case with aggregate attributes, the running time increases
sharply with $k$.  The grouping algorithm performs the best while the na\"ive is
the worst.  The dominator-based algorithm suffers since the time spent in
finding the dominators is too large and is not sufficiently compensated later.
Interestingly, since the join time is constant for the na\"ive algorithm
irrespective of the value of $k$, the proportion of time spent in joining is
much higher for lower $k$.

Fig.~\ref{subfig:nod} holds $k$ constant and varies $d$ over two settings.
When $k$ is fixed and $d$ increases, the values of $\koned$ and $\ktwod$
decrease.  This results in faster grouping time.  The dominator sets are also
computed faster.  Thus, the the overall time decreases.

\begin{figure}[t]
	\centering
	\subfloat[Effect of $k$.]
	{
		\includegraphics[width=\subfigwidth]{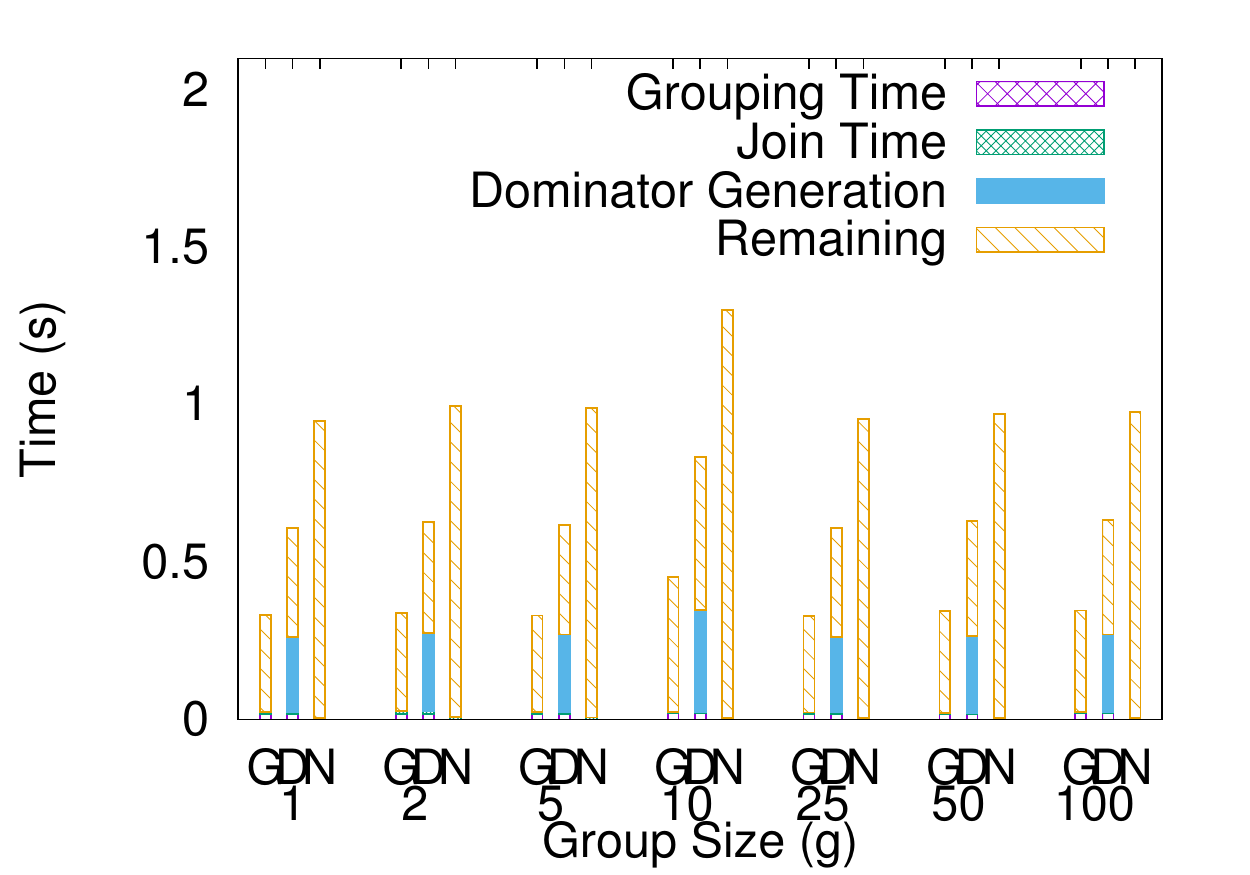}
		\label{subfig:nog}
	}
	\subfloat[Effect of $d$.]
	{
		\includegraphics[width=\subfigwidth]{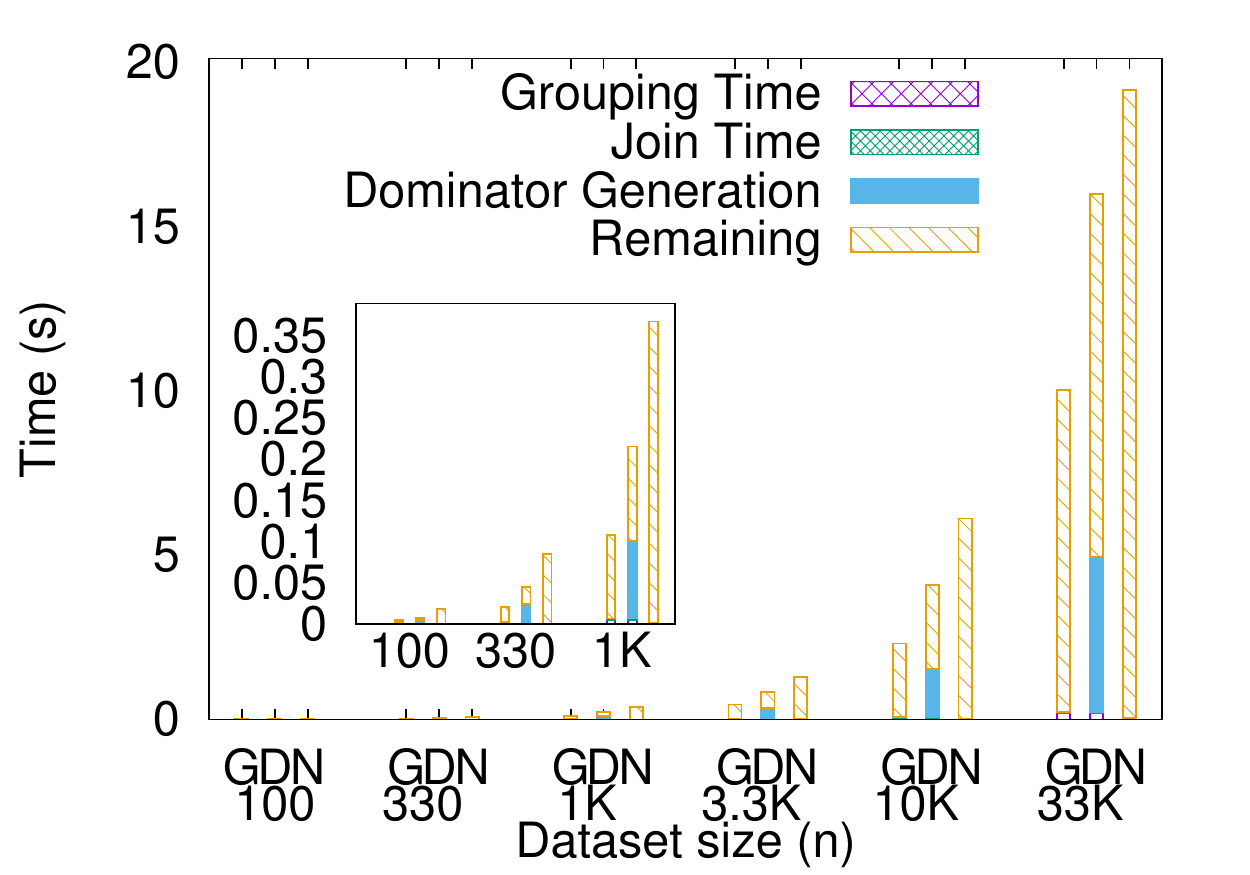}
		\label{subfig:non}
	}
	\figcaption{Scalability (no aggregation).}
	\label{fig:nos}
\end{figure}

\subsubsection{Effect of Number of Join Groups}

As explained earlier in Sec.~\ref{sec:g}, the number of join groups has two
opposing effects on the running time.  Therefore, as shown in
Fig.~\ref{subfig:nog}, the results are similar.  Note that the values of $d$
and $k$ here are $d = 4$ and $k = 7$.

\subsubsection{Effect of Dataset Size}

Fig.~\ref{subfig:non} shows that the running time increases drastically with
$n$, although the scalability is sub-linear in the size of the joined relation,
i.e., $n^2$.  The largest dataset size we tested was for $n = 33,000$, leading
to a massive size of over $10^8$ for the joined relation.  The fact that the
grouping algorithm produces the result in less than 20\,s for this case
establishes the practicality of the algorithms.

\subsubsection{Effect of Type of Data Distribution}

The effect of type of data distribution (Fig.~\ref{fig:not}) is similar to that
in Sec.~\ref{sec:t} with the anti-correlated requiring the largest amount of
time and correlated the least.

\begin{figure}[t]
	\centering
	\subfloat[Data type.]
	{
		\includegraphics[width=\subfigwidth]{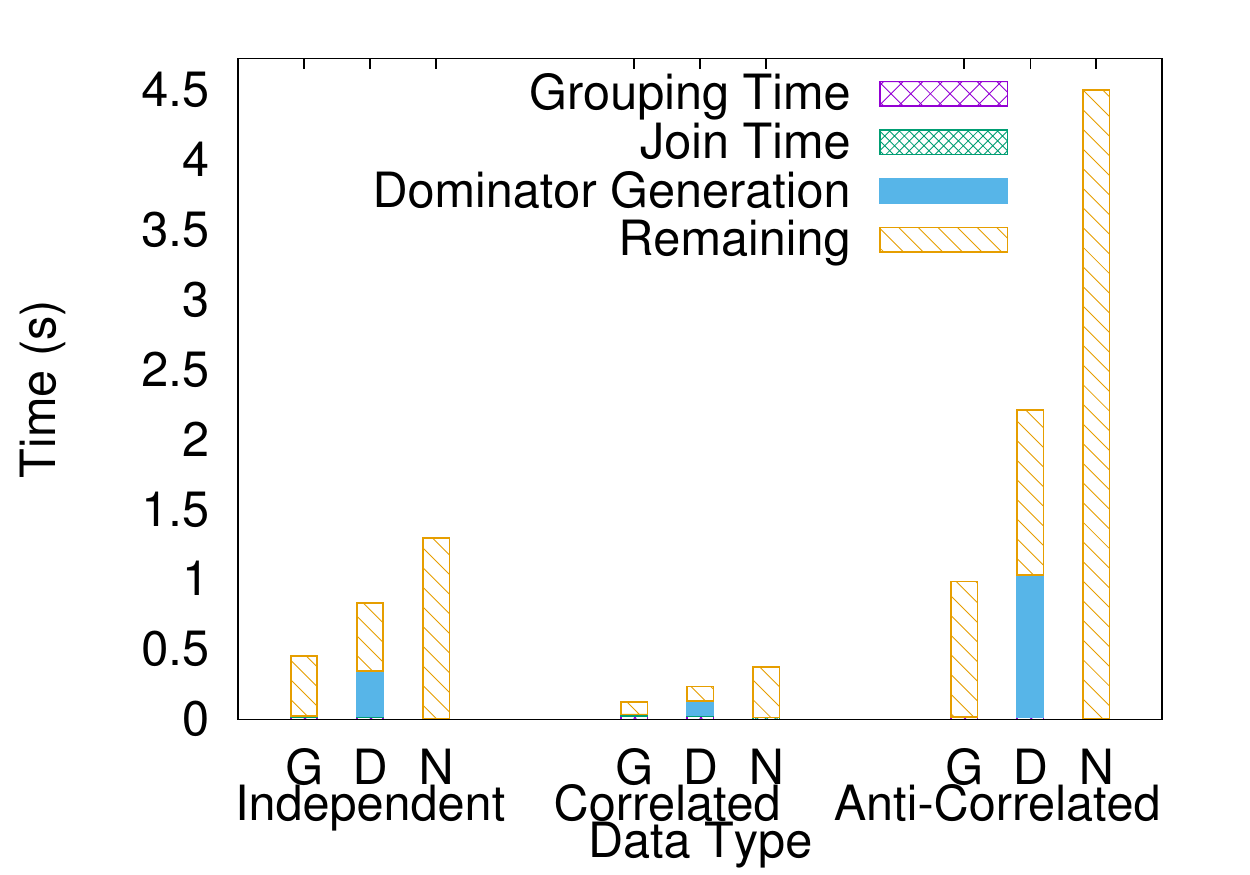}
		\label{subfig:not}
	}
	\figcaption{Type of data distribution (no aggregation).}
	\label{fig:not}
\end{figure}

\subsection{Finding $k$}

The third and final set of experiments deals with finding the value of $k$
given a threshold $\delta$ of requisite number of \kdom skylines.  Aggregate
attributes are not used.

\comment{

\begin{verbatim}
Data_Type
Delta_3300
Dimension_10000
Group_Size
N
\end{verbatim}

}

\subsubsection{Effect of Threshold $\delta$}

Fig.~\ref{subfig:kdelta} shows the effect of varying $\delta$ values.  The
dimensionality is held fixed at $d = 5$ with $a = 0$.  The possible values of
$k$, therefore, range from $d+1 = 6$ to $2d = 10$.  The size of each base
relation is $n = 3,300$ with $g = 10$, thereby resulting in more than $10^6$
joined tuples (as listed in Table~\ref{tab:param}).

With increasing $\delta$, the na\"ive algorithm requires larger running times
since it keeps iterating over $k$.  The range-based search also iterates over
the values of $k$, although it avoids the costly full \kdom skyline computation
in the intermediate steps, if possible.  When $\delta$ is very large, it falls
outside the upper bounds for most values of $k$ and, hence, the algorithm runs
very fast.  In this experiment, when $\delta \geq 10,000$, the largest possible
$k = 10$ is returned as the answer.

For low values of $\delta$, the iterations of both na\"ive and range-based
algorithms stop early.  The answer for $\delta = 10$ is $k = 8$ while that for
$\delta = 100$ and $\delta = 1,000$ are both $k = 9$.

The binary search method is faster even for medium values of $\delta$ since it
avoids the iterative procedure and quickly finds the desired value of $k$.
Overall, it is always the fastest algorithm.

\begin{figure}[t]
	\centering
	\subfloat[Effect of $\delta$.]
	{
		\includegraphics[width=\subfigwidth]{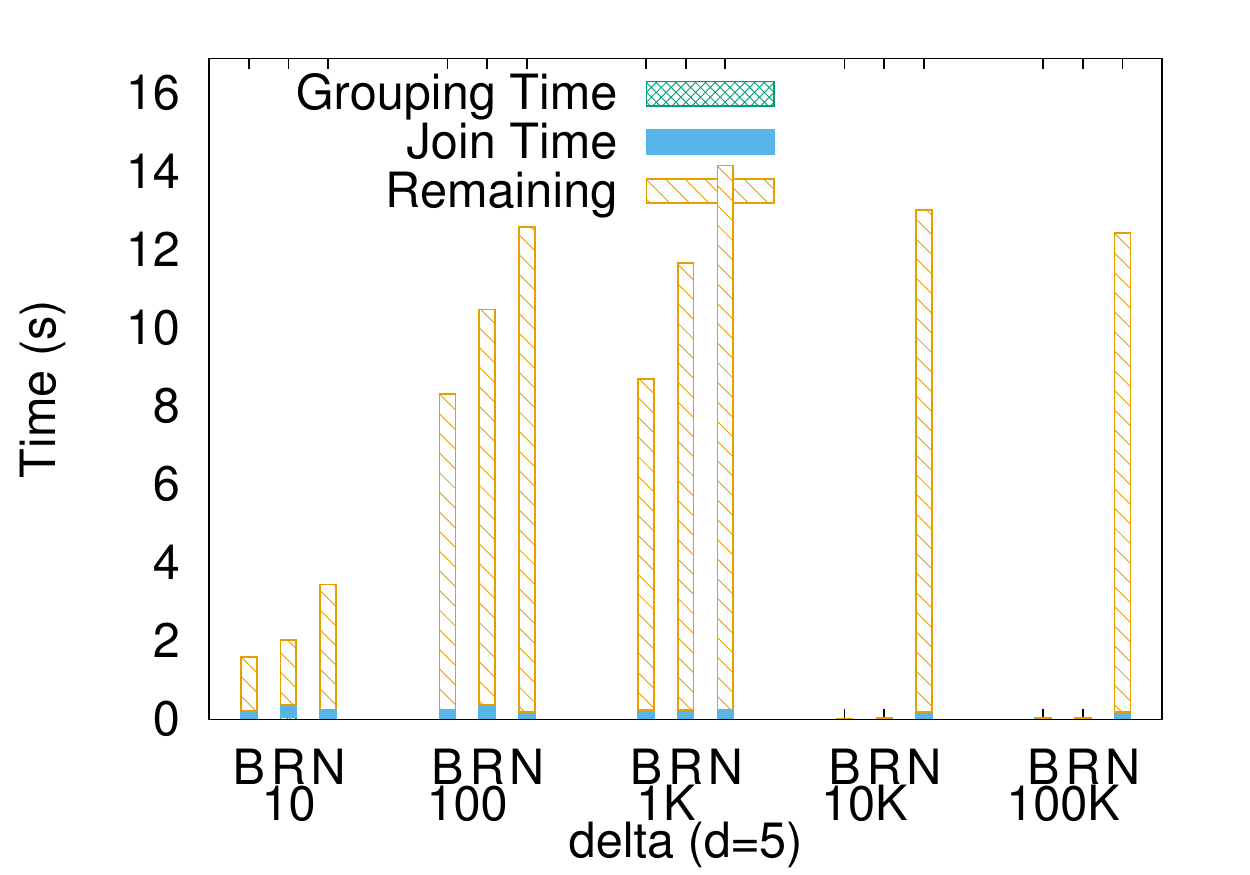}
		\label{subfig:kdelta}
	}
	\subfloat[Effect of $d$.]
	{
		\includegraphics[width=\subfigwidth]{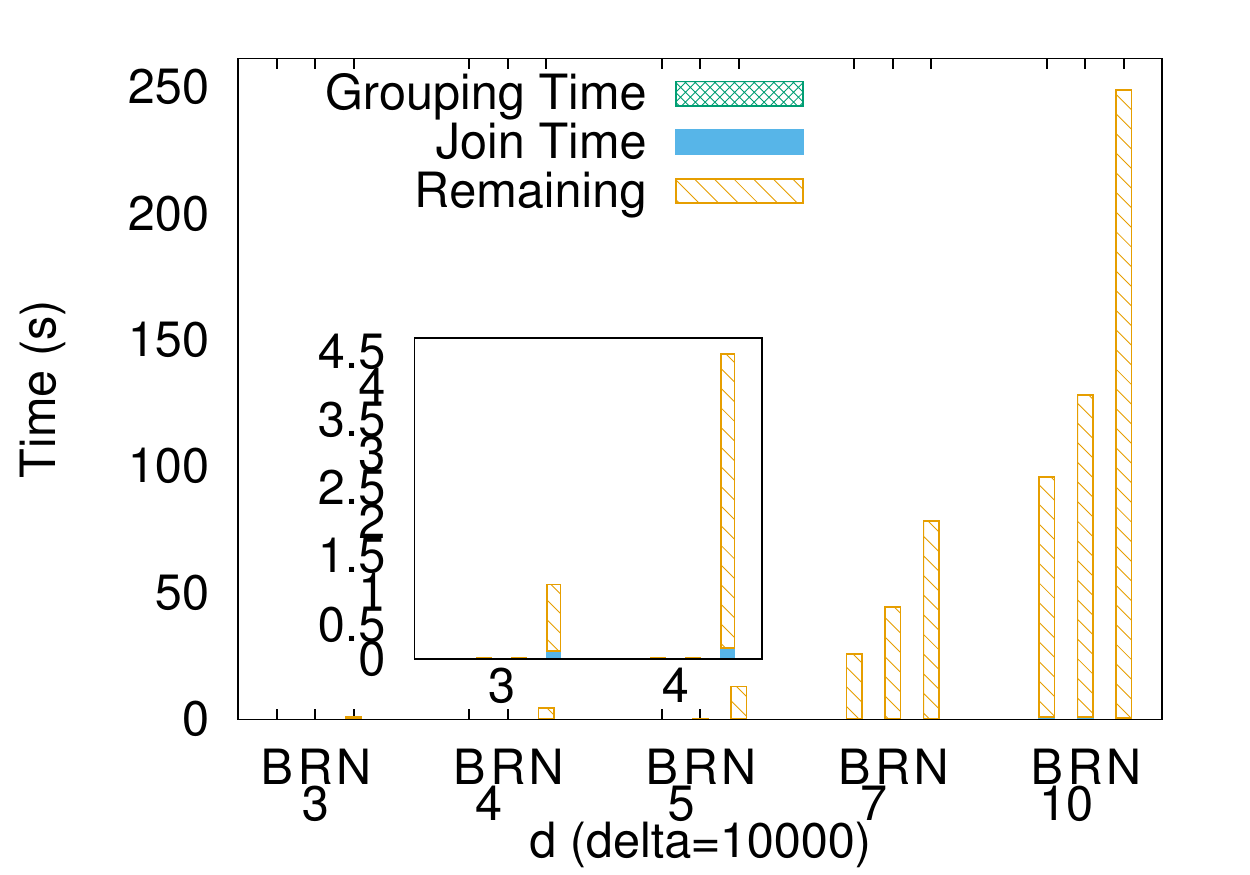}
		\label{subfig:kd}
	}
	\figcaption{Effect of dimensionality (finding $k$).}
	\label{fig:kdeltad}
\end{figure}

\subsubsection{Effect of Dimensionality $d$}

Fig.~\ref{subfig:kd} shows the complementary effect, that of varying
dimensionality $d$ while keeping $\delta$ constant at $10,000$.  When $d$ is
low, the chosen $\delta$ value is too large and the algorithms terminate fast.
When $d$ is increased, the algorithms need to search through a larger range
and, therefore, takes a much longer time.  The binary search method is
consistently the fastest algorithm.  It outperforms the range-based algorithm
by about 1.2-1.5 times while the na\"ive algorithm is slower by a factor of
2-2.5.

\begin{figure}[t]
	\centering
	\subfloat[Effect of $g$.]
	{
		\includegraphics[width=\subfigwidth]{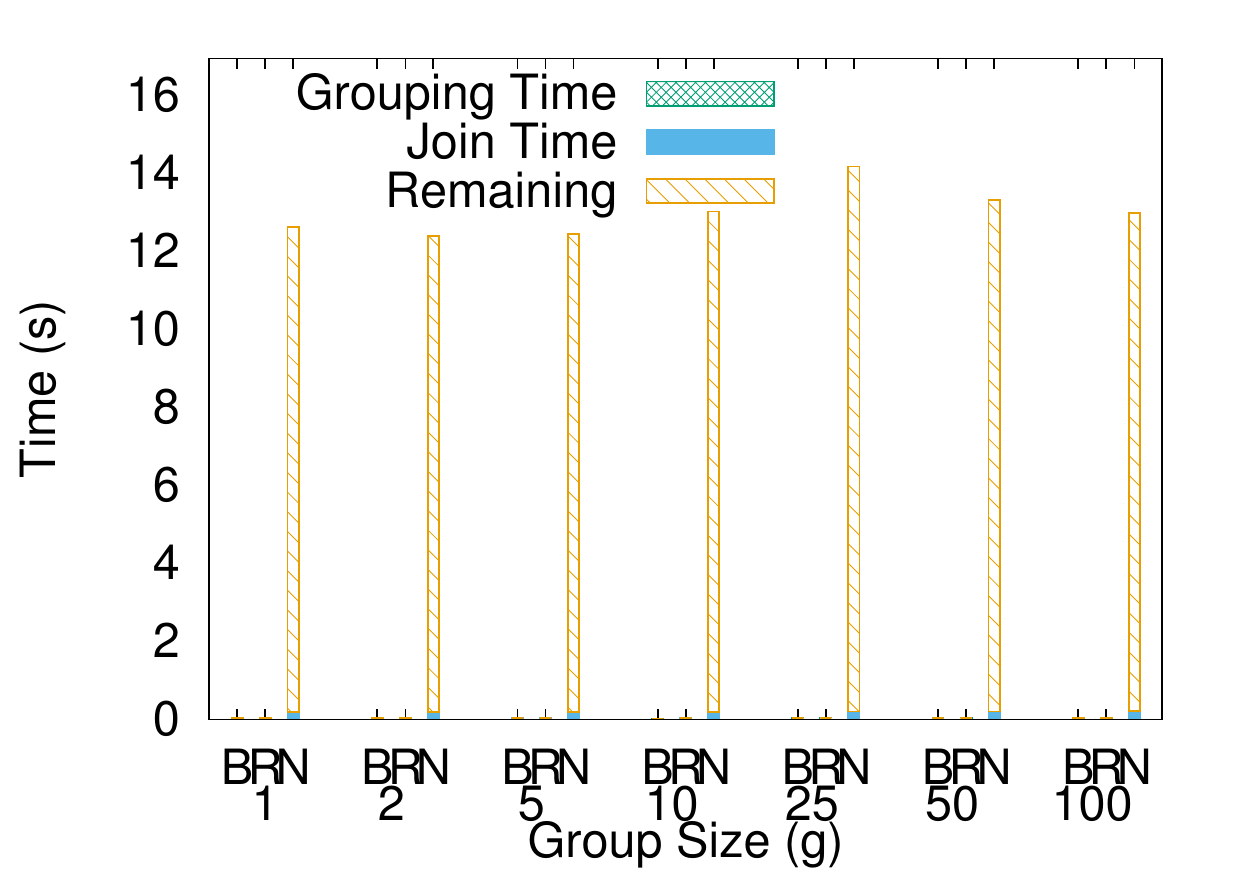}
		\label{subfig:kg}
	}
	\subfloat[Effect of $n$.]
	{
		\includegraphics[width=\subfigwidth]{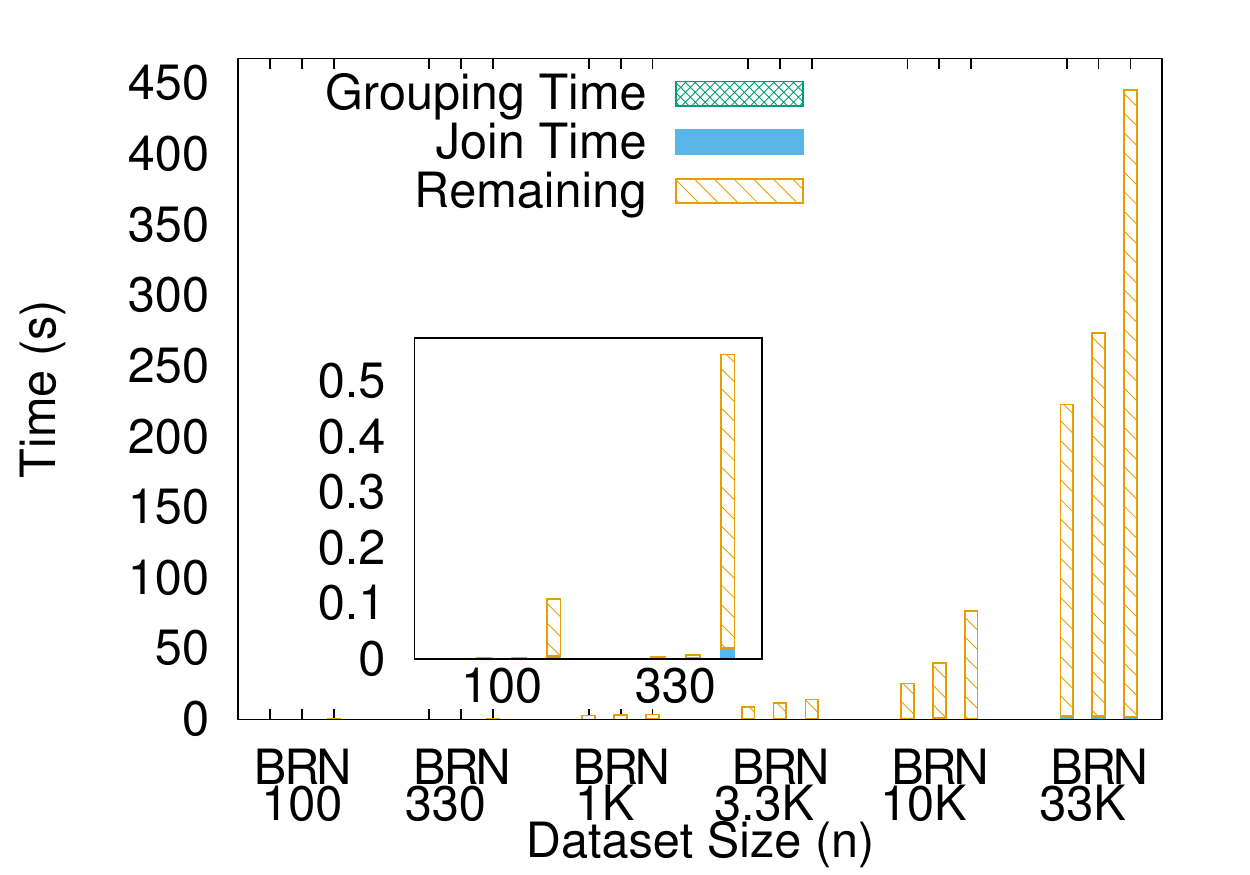}
		\label{subfig:kn}
	}
	\figcaption{Scalability (finding $k$).}
	\label{fig:ks}
\end{figure}

\subsubsection{Effect of Number of Join Groups}

There is no appreciable effect of the number of join groups on the algorithms
for finding $k$ (Fig.~\ref{subfig:kg}).

\subsubsection{Effect of Dataset Size}

Fig.~\ref{subfig:kn} shows the effect of increasing dataset size, $n$.  The
dimensionality and threshold values are kept fixed at $d = 5$ and $\delta =
1,000$ with $g = 10$.

For very low values of $n$ (up to $1,000$), the threshold is too high, and the
maximum possible $k = 10$ is required to satisfy the threshold $\delta$.

With increasing $n$, the running time increases due to increasing \kdom skyline
computation times.  Even for larger $n$, the values of $k$ are towards the
higher end.  Therefore, the binary search algorithm is the most suitable one
for finding $k$.

\begin{figure}[t]
	\centering
	\subfloat[Data type.]
	{
		\includegraphics[width=\subfigwidth]{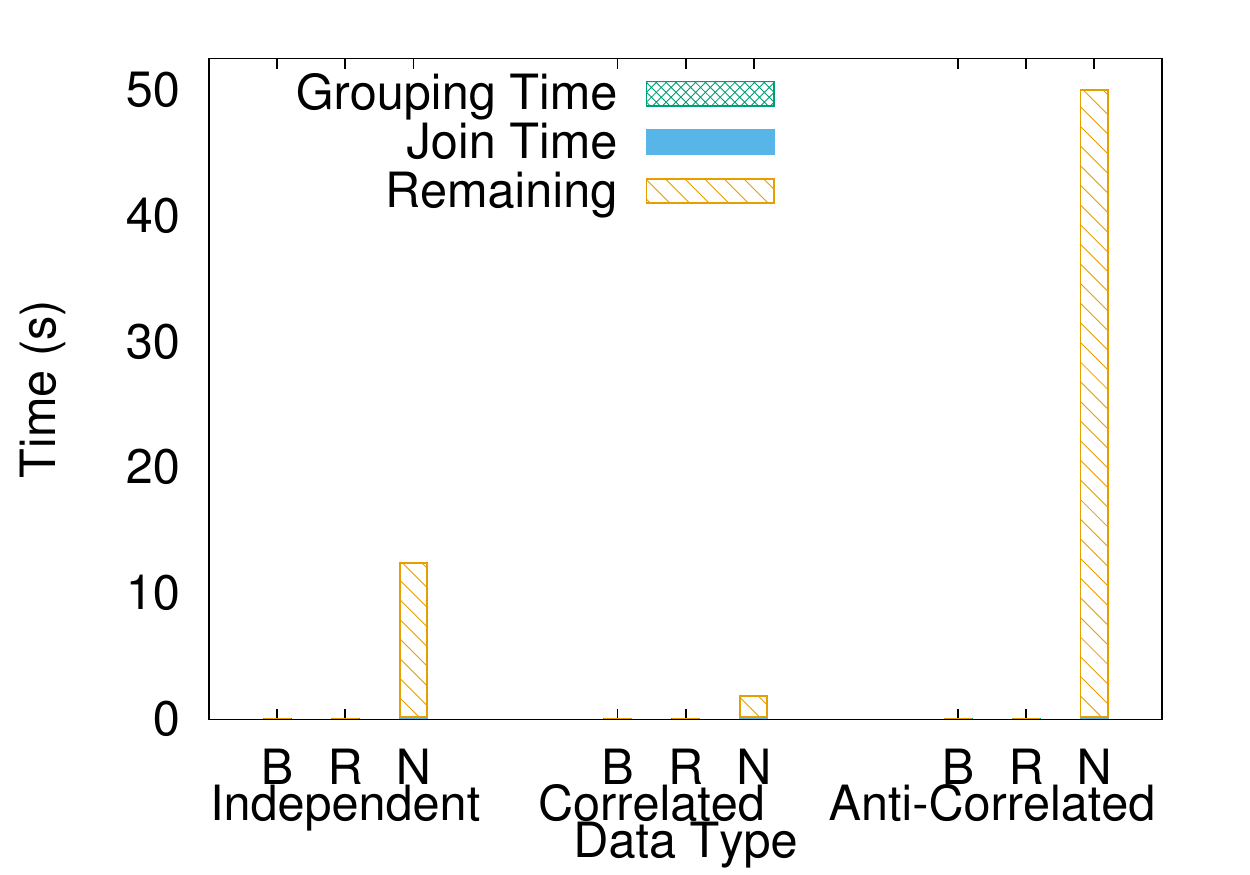}
		\label{subfig:kt}
	}
	\figcaption{Type of data distribution (finding $k$).}
	\label{fig:kt}
\end{figure}

\subsubsection{Effect of Data Type}

The effect of type of data distribution is as expected with correlated being
the fastest and anti-correlated the slowest (Fig.~\ref{subfig:kt}).

\subsection{Real Dataset}
\label{sec:real}

To test our algorithms on real data, we collected information about various
attributes on domestic flights in India from \url{www.makemytrip.com}.  The
first base table contained information about $192$ flights from New Delhi to
$13$ important cities of India while the second base table contained
information about $155$ flights from those $13$ cities to Mumbai.  For each
base table, $5$ attributes were considered: cost, flying time, date change fee,
popularity, and amenities.  The first $2$ attributes were aggregated while the
other $3$ were used as local attributes.  The join was based on the equality of
the intermediate city.  Thus, each tuple in the joined relation contained $3 +
3 + 2 = 8$ attributes.  The size of the joined relation was $2649$.


We ran experiments on $k = 6, 7, 8$.  The results are summarized in
Fig.~\ref{fig:real}.
The grouping-based algorithm performed the best followed by the dominator-based
method and the na\"ive algorithm.  All the results were produced in
milliseconds highlighting the practicality of the methods.

\subsection{Summary of Experiments}

The experiments show that the grouping algorithm consistently outperforms the
other methods in solving the KSJQ queries.  For finding the right value of $k$,
the binary search algorithm turns out to be the best method always.

\section{Conclusions}
\label{sec:conc}

In this paper, we proposed a novel query, $k$-dominant skyline join query
(KSJQ), that incorporates finding \kdom skylines over joined relations where
the attributes may be aggregated as well.  We analyzed certain optimizations
for the query and used them to design efficient algorithms.  In addition, given
the number of final skylines sought, we also proposed efficient algorithms to
find the right value of $k$.

In future, we would like to extend the algorithms to work in parallel,
distributed and probabilistic settings.

\begin{figure}[!tb]
	\centering
		\includegraphics[width=\subfigwidth]{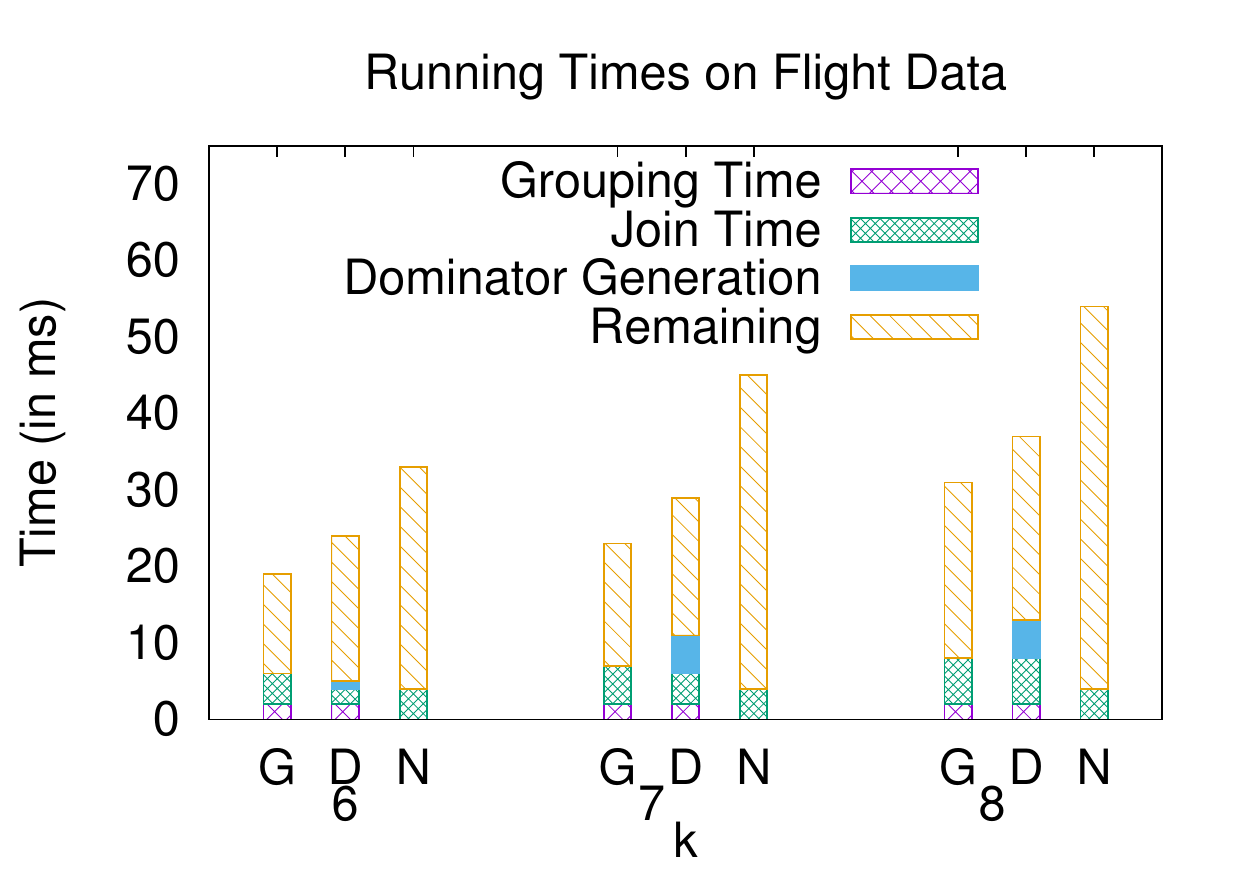}
	\figcaption{Real data.}
	\label{fig:real}
\end{figure}


{\small
\bibliographystyle{abbrv}
\balance
\bibliography{papers}
}

\end{document}